\newtheorem{theorem}{Theorem}
\newtheorem{lemma}{Lemma}
\newtheorem{remark}{Remark}
\newcommand{\R}{\mathbb{R}}
\newcommand{\N}{\mathbb{N}}
\newcommand{\caL}{\mathcal{L}}
\newcommand{\dd}{\mathrm{d}}
\newcommand{\id}{\mathrm{i}}
\newcommand{\ed}{\mathrm{e}}
\newcommand{\Const}{\mathrm{C}}
\newcommand{\econd}{E}
\begin{document}

\title{Subdiffusion in one-dimensional Hamiltonian chains with sparse interactions}

\author[1]{Wojciech De Roeck}
\affil[1]{
Instituut Theoretische Fysica, 
KU Leuven, 
3001 Leuven, Belgium
}

\author[2]{Francois Huveneers}
\affil[2]{
Ceremade,
UMR-CNRS 7534, 
Universit\'e Paris Dauphine, 
PSL Research University, 
75775 Paris cedex 16, 
France
}

\author[2]{Stefano Olla}

\date{\today}
\maketitle 

\begin{abstract}
We establish rigorously that transport is {slower than diffusive} for a class of disordered one-dimensional Hamiltonian chains. 
This is done by deriving quantitative bounds on the variance in equilibrium of the energy or particle current, as a function of time.
The slow transport stems from the presence of rare insulating regions (Griffiths regions). 
In many-body disordered quantum chains, they correspond to regions of anomalously high disorder, where the system is in a localized phase.
In contrast, we deal with quantum and classical disordered chains where the interactions, 
usually referred to as anharmonic couplings in classical systems, are sparse.
The system hosts thus rare regions with no interactions and, 
since the chain is Anderson localized in the absence of interactions, the non-interacting rare regions are insulating.    
Part of the mathematical interest of our model is that it is one of the few non-integrable models where the diffusion constant can be rigorously proven not to be infinite.
\end{abstract}

\begin{center}
\emph{Dedicated to Joel L. Lebowitz, for being a constant source of inspiration}
\end{center}

\section{Introduction}\label{sec: introduction}
One-dimensional Hamiltonian systems, also known as chains, sometimes exhibit anomalous transport properties, 
i.e.\ non-diffusive transport of locally conserved quantities like energy, particle number, momentum, etc.  
Whereas the possibility of superdiffusion is well-documented \cite{lepri_2016}, subdiffusion has not been studied as intensively.

In disordered systems, \emph{Anderson localization} can actually suppress transport entirely \cite{anderson_1958,gol1977pure,kunz_souillard_1980,froehlich_spencer_1983}. 
It can be realized e.g.\@ in disordered chains of  free fermions or, equivalently, in some quantum spin chains. 
Perhaps surprisingly, it can also be realized in classical disordered harmonic chains: 
The dynamics is linear and all the modes of the chain become exponentially localized in the presence of disorder.\footnote{
The first instances of disordered harmonic chains have been introduced in \cite{rubin_greer_1971,casher_lebowitz_1971}.
Because of momentum conservation, the localization length diverges in the bottom of the spectrum for these models, yielding a more complicated phenomenology. 
We will not deal with such cases here.}
For Anderson insulators, the current of conserved quantities is exponentially suppressed as a function of the length of the chain and, 
in equilibrium, the variance of the current of a conserved quantity, $C(t)$, does not grow with time, i.e.\@ $C(t) = \mathcal O (1)$ as $t\to \infty$ \cite{bernardin_huveneers_2013}.
See \eqref{eq: conductivity} below for a precise definition of $C(t)$.
For the quantum systems quoted above, there is strong evidence that, upon turning on interactions, 
localization persists at strong disorder and weak interactions, resulting in the so called many-body localized (MBL) phase
\cite{fleishman_anderson_1980,gornyi_mirlin_polyakov_2005,basko_aleiner_altshuler_2006,oganesyan_huse_2007,serbyn_papic_abanin_2013,imbrie_jsp_2016}, 
see also \cite{adp_2017} for reviews and more recent progress. 
There as well we expect $C(t) = \mathcal O (1)$ as $t\to \infty$.
In contrast, for classical systems, arbitrarily small anharmonic interactions are expected to destroy localization
\cite{dhar_lebowitz_2008,oganesyan_pal_huse_2009,mulanski_ahnert_pikovsky_shepelyansky_2009,basko_2011}.

Subdiffusion refers to the case where conserved quantities do evolve at the macroscopic scale, i.e.\@ $C(t) \to \infty$, 
but slow enough {so that  $C(t)/t \to 0$ as $t\to \infty$.}
Contrary to localization, subdiffusion does not challenge the basic principles of thermodynamics,
and we may expect the phenomenon to be more widespread. 
Interestingly, the existence of a subdiffusive ergodic phase has been predicted theoretically 
\cite{agarwal_gopalakrishnan_knap_mueller_demler_2015,gopalakrishnan_agarwal_demler_huse_knap_2016,agarwal_et_al_2017,potter2015universal, altman2015universal,lev2015absence} 
to occur in quantum chains with moderate disorder, close below the MBL transition point, see Figure~\ref{fig: phase diagram}.
This prediction has been verified numerically, although, to our best knowledge, 
it is not conclusively known whether the observed subdiffusion can unambiguously be ascribed to the theoretical predictions
\cite{luitz_et_al_2016,luitz_lev_2016,luitz_lev_2017,vznidarivc2016diffusive,kozarzewski2018spin,roy2018anomalous,mendoza2019asymmetry,schulz2018energy}. 
{This phenomenon is purely one-dimensional and rests on the presence of bottlenecks in the chain\footnote{More precisely, it rests on the sparsity of loops and the presence of bottlenecks, see \cite{de2019sub}.}}. 
For generic classical systems, one does not expect the existence of an analogous subdiffusive phase, see \cite{basko_2011} for a detailed analysis.
 
In this paper, we present a mathematically rigorous derivation of slower-than-diffusive transport in a class of quantum and classical models, 
suggested to us by D.~A.~Huse. 
In contrast to most models studied in the literature, the interactions considered here are \emph{sparse} (but not weak),  
turning them on at each site independently with probability $p<1$.
A similar set-up was recently studied in \cite{nachtergaele_reschke_2019} and mathematical results on long transmission times were derived. 
As it turns out, the existence of a subdiffusive phase can then be shown to hold both for classical and quantum Hamiltonians 
(even though the corresponding classical systems are still not expected to have a MBL phase). 

The rest of this paper is organized as follows. 
Below, we provide a more detailed heuristic explanation of the phenomenology. 
In Section \ref{sec: models and results}, we define properly the classcial and quantum Hamiltonians studied in this paper, 
and we state precise conditions for the existence of a subdiffusive phase in Theorem \ref{th: disordered chain} for the classical chain, 
and in Theorem \ref{th: disordered fermion chain} for the quantum chain. 
The proof of Theorem \ref{th: disordered chain} is presented in Section \ref{sec: proof disordered} 
and the proof of Theorem \ref{th: disordered fermion chain} in Section~\ref{sec: proof disordered fermion}.

\subsection{Griffiths regions in disordered chains}
Following \cite{agarwal_gopalakrishnan_knap_mueller_demler_2015}, let us explain heuristically the origin of subdiffusion in disordered quantum systems. 
For the sake of concreteness, let us consider the celebrated ``disordered XXZ spin chain":
\begin{equation}\label{eq: true xxz}
	H \; = \;  \sum_{x=1}^{L-1} \big( J (S^{(1)}_x S^{(1)}_{x+1}  +   S^{(2)}_x S^{(2)}_{x+1} ) + g S^{(3)}_{x} S^{(3)}_{x+1} \big) + \sum_{x=1}^L \omega_x S^{(3)}_{x} 
\end{equation}
which has become the standard model for MBL. Here, $S_x^{1,2,3}$ are the spin-$\tfrac12$ spin operators (Pauli matrices) acting on site $x$ of a chain (see e.g.\ \cite{alicki_fannes_2001} for more background and a full definition) and $J,g$ are parameters of the model. 
The on-site fields $\omega_x$ are assumed to be independent, identically distributed (iid) random variables, 
e.g.\ distributed according to a normal distribution with zero mean and standard deviation $W$, so that the parameter $W$ plays the role of the ``disorder strength".
The specific choice of the distribution plays no crucial role, and in particular the fact that $\omega_x$ is unbounded turns out to be eventually irrelevant.
For $g=0$, the system is an Anderson insulator, since it can be mapped to free fermions by a Jordan-Wigner transformation, see e.g.~\cite{abdul_et_al_2017}.
Hence, it is localized at all values of $W>0$.
Instead $g>0$ brings in interactions among the spins and the system is expected to stay localized only at large disorder $W>W_c$,
while it becomes delocalized for smaller values of $W$, see Figure~\ref{fig: phase diagram}.

For $W > W_c$, i.e.\@ in the localized phase, the resistance of of a stretch of length $\ell$ is given by 
\begin{equation}\label{eq: resistance localized}
	r\sim \ed^{2\ell/\xi} \, 
\end{equation}
where $\xi=\xi(W)$ is the \emph{localization length} characterizing the localized system.
Strictly speaking, the resistance $r$ depends on the disorder realization, 
but for large $\ell$, we can neglect the fluctuations around the average value in \eqref{eq: resistance localized}.
Here, we focus on a system where the disorder strength is actually too small to bring about localization, i.e.\@ $W <W_c$,
but local anomalously large disorder values can be strong enough to put small stretches of the systems in the localized phase,  
i.e.\@ these regions would be in the localized phase if they would be disconnected from the rest of the chain.
For example, it can be that on a stretch of size $\ell$, the value of the empirical standard deviation $\tilde W$ is larger than $W_c$.
Such a region is called a \emph{Griffiths region}.
As can be inferred from the previous example, a Griffiths region of length $\ell$ occurs in a given place with probability $\ed^{-c(W)\ell}$, 
where the large deviation rate $c(W)$ is expected to vanish as $W$ approaches the critical value $W_c$.  
If we assume for simplicity that all such Griffiths regions have the same localization length $\xi_*$, 
then the probability distribution of the resistance $r$ of a single Griffiths region is given roughly by
$$
	p(r)\dd r  \;  \propto   \;  \ed^{-c(W)\ell} \ed^{-2\ell/\xi_*} \dd r   \; \propto  \;  r^{-(1 + \xi_* c(W)/2)}   \dd r \, .
$$ 
For $\xi_*c(W)<2$, we notice that ${E}(r)=\infty$. 

To understand how Griffiths regions can lead to a subdiffusive behavior when $\xi_*c(W)<2$, 
we represent the chain as a Ohmic circuit of resistances $(r_i)_{1 \le i \le L}$ in series, where the resistances are the Griffiths regions
(sites outside of any Griffiths region may simply be assumed to have a resistance of order 1). 
The resistances $(r_i)_{1 \le i \le L}$ are i.i.d.\@ in very good approximation. 
The total resistance is $R=\sum_{i=1}^L r_i$ and the conductivity, which is more suited for our discussion, is given by
$K = L/R $, where $L$ stands for the length of the chain. 
If the resistances $r_i$ have very heavy tails, in particular if ${E}(r_i) = \infty$, then the conductivity vanishes almost surely: $K \to 0$, as $L \to \infty$.
Instead, if ${E}(r_i) < \infty$, then $K>0$.
The system is thus subdiffusive if $\xi_*c(W)<2$, i.e.\@ if one is not too far from the localized phase, see Figure~\ref{fig: phase diagram}.

\begin{figure}[H] 
	\begin{center}
    \vspace{5mm}
  	\includegraphics[width=10cm]{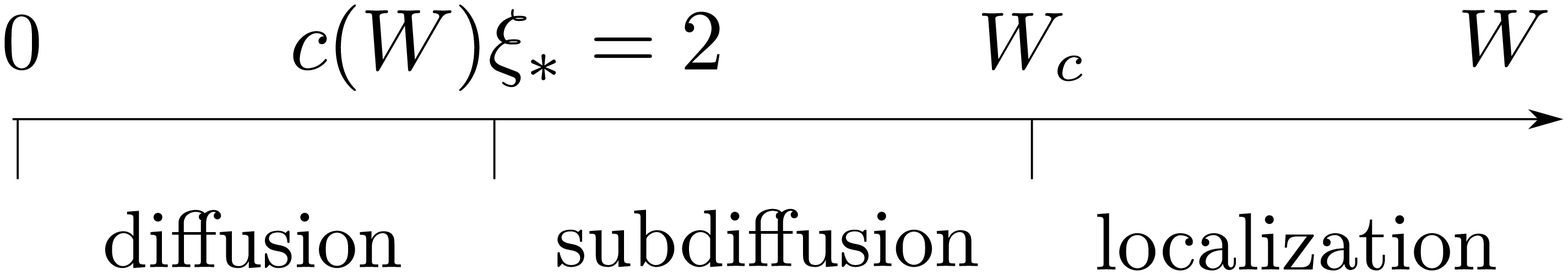}
    \vspace{3mm}
    \end{center}
    \caption{Transport in a generic quantum chain as a function of disorder strength $W$.} 
    \label{fig: phase diagram}
 \end{figure}

\subsection{Systems without genuine localized phase}\label{sec: systems without genuine}
As said, for classical systems with generic anharmonic interactions, one does not expect a genuinely localized phase. 
This is related to the fact that, at positive temperature in the thermodynamic limit, all typical configurations are expected to contain some chaotic spots
that will eventually break the quasi-periodic behavior of the dynamics.
Yet, the dynamics in certain classical systems can certainly be extremely slow and glassy, see \cite{de_roeck_huveneers_2019} for a recent review.  
The biggest difference with quantum chains is that the presence of an insulating Griffiths region is mainly singled out by an atypical configuration of the oscillators.
Indeed, large disorder fluctuations will only ensure that a larger proportion of the configurations is insulating, never all of them in generic cases.
Therefore, if the dynamics is eventually ergodic as one expects, a specific Griffiths region exists only for a finite time. 
A detailed quantitative analysis done in \cite{basko_2011} shows that the lifetime of Griffiths regions, compared to their resistance, 
is too short to bring about subdiffusion. 
While this conclusion pertains to generic anharmonic disorder chains of oscillators, 
we will see that it does not apply to the more specific class of classical Hamiltonians considered in this paper.

\subsection{Sparse interactions}
The models treated in this paper have the following specific property:
\emph{Interactions} or \emph{anharmonic couplings} are present at each site with probability $p$ and absent with probability $1-p$. {In a typical realization, there are hence arbitrarily long stretches} of sites with no interaction and these stretches play the role of Griffiths regions. 
These Griffiths regions are thus simply Anderson insulators. 
In particular, the issue raised above for classical chains is avoided: 
All the modes of the chain are exponentially localized and the region behaves as a perfect insulator 
(i.e.\@ with a conductivity vanishing exponentially with the length) independently of the region in phase space that the system occupies. 

Finally, one may wonder whether our quantum model is actually localized in the regime where we predict subdiffusion. 
The rigorous treatment of this question goes way beyond the present paper, but we can speculate on the phase diagram of our model.
Let us restrict to $p \ll 1$, i.e.\ very sparse interactions. 
In the regions where the interactions are switched off, our model is simply an Anderson insulator and hence localized regardless of the disorder strength. 
Then, the question is whether the sparse regions with interaction can delocalize such a system. 
This question has been investigated and answered in previous work, \cite{de_roeck_huveneers_2017, luitz_huveneers_de_roeck_2017}, 
and the upshot is that we should expect the system to be localized whenever the localization length of the non-interacting system is small enough, and delocalized otherwise.   
Therefore, our model is indeed expected to have a localization-delocalization transition and the region where we prove subdiffusion overlaps with presumed delocalized phase.

 \begin{figure}[H] 
        \begin{center}
     \vspace{5mm}
            \includegraphics[width=10cm]{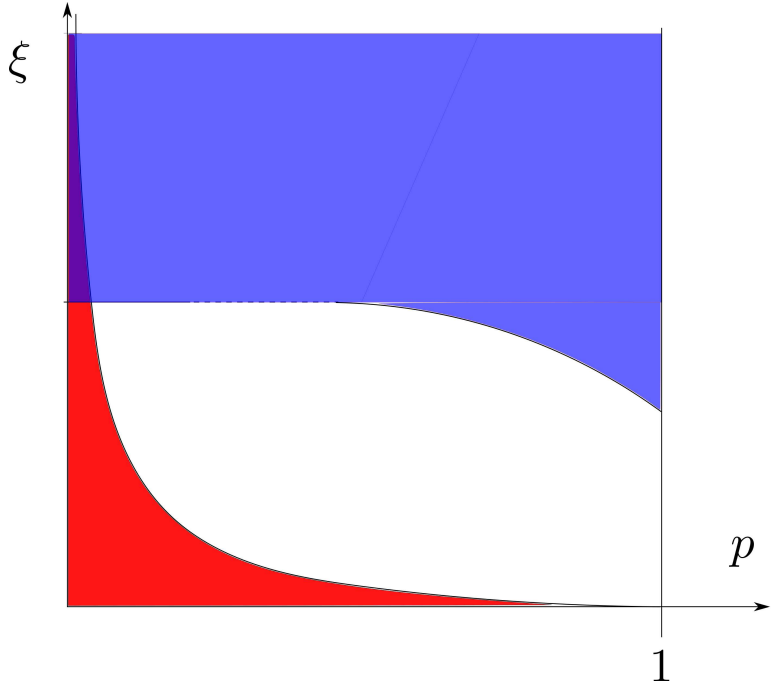}
             \vspace{3mm}
        \end{center}
        \caption{
        Our proof of slower than subdiffusive transport applies in the red area of the $p-\xi$ plane.  
        For our quantum model, the blue area indicates where we expect a delocalized (non-MBL) phase, and we see that these regions have a non-zero intersection} \label{fig: sparse diagram}
 \end{figure}

\section{Models and Results}\label{sec: models and results}
We study in details two different Hamiltonian chains. Each of them is representative of a larger class of systems. 
Let  $\N_L=\{1,\ldots,L\}$ be the set of sites, and we will eventually take the length $L\to \infty$.

\subsection{Anharmonic disordered chain}\label{sec: classical chain}
Let us first consider the disordered classical anharmonic chain, described by the following Hamiltonian on $\R^{2L}$: 
\begin{equation}\label{eq: disordered chain}
	H (q,p)
	\; = \; 
	\sum_{x=1}^L h (q_x,q_{x+1},p_x,\omega_x,\tau_x)
	\; = \; 
	\sum_{x=1}^L \left( \frac{p_x^2}{2} + \omega_x^2 \frac{ q_x^2}{2} + g \tau_x \frac{q_x^4}{4} + g_0  \frac{(q_{x+1} - q_x)^2}{2} \right)
\end{equation}
with Neumann boundary conditions at both ends, i.e.\@ $q_{L+1} = q_L$. 
We take the anharmonic coupling $g$ and the harmonic coupling $g_0$ to be non-negative.

The frequencies of the uncoupled oscillators, $(\omega_x)_{x\in \N_L}$, 
are independent and identically distributed (iid) random variables with a bounded density with compact support not containing $0$, 
i.e.\@ there exists $\omega_- > 0$ such that 
$$
	\omega_x^2 \; \ge \; \omega_-^2 \; > \; 0 \qquad \text{almost surely}.
$$
The latter condition is most likely not essential but guarantees that the Hamiltonian $H$ is strictly convex, 
a feature that turns out to be convenient to establish the decay of correlations in the Gibbs ensemble. 
The variables $\tau_x$ are iid random variables, independent of $(\omega_x)$,
and with value in $\{ 0 , 1 \}$. 
We define 
$$
	p \; = \; P (\tau_x = 1) .
$$

The dynamics is given by Hamilton's equations: 
\begin{equation}\label{eq: Hamilton}
	\dot q \; = \; p, \qquad \dot p = - \nabla_q H(q,p)
\end{equation}
and the generator of the dynamics is given by the usual Liouville operator: 
$$
	\mathcal L \; = \; p \cdot \nabla_q - \nabla_q H \cdot \nabla_p  \, .
$$
The energy is the only obvious locally conserved quantity. 
For simplicity let us write $h_x = h (q_x,q_{x+1},p_x,\omega_x,\tau_x)$.
For $ x \in \N_L$, we define the energy current $j_x$ between the site $x$ and $x+1$ as
$$
	j_x \; = \;  \{h_{x+1},h_{x}\}   \; = \;  - g_0 p_{x+1} (q_{x+1} - q_x) 
$$
with the conventions $q_0 = q_1$ and $q_{L+1} = q_L$, so that $j_0 = j_L = 0$ as imposed by the Neumann boundary conditions in \eqref{eq: disordered chain}.
This definition ensures that the currents satisfy the continuity equation
$$
	\mathcal L h_x \; = \; - \nabla j_x \; = \; j_{x-1} - j_x \, .
$$

\subsubsection{The harmonic chain: $g = 0$}
At $g=0$, the oscillator chain is an Anderson insulator due to disorder.  
Indeed, the equations of motion \eqref{eq: Hamilton} are linear and the equation for $p$ takes the form 
\begin{equation}\label{eq: anderson hamiltonian}
	\dot p_x \;=\; -\sum_y \mathcal H_{x,y}  q_y
	\qquad \text{with}\qquad
	\mathcal H \; = \; V -g_0 \Delta
\end{equation}
where $V_{x,y}=\omega_x^2 \delta_{x,y}$ and where $\Delta$ is the discrete laplacian on $\N_L:= \{ 1, \dots , L\}$ with Neumann boundary conditions: 
$$
	\Delta f(x) 
	\; = \;
	f(x-1) - 2f(x) + f(x+1) 
	\qquad \text{for} \qquad x \in \N_L \, 
$$
with the convention $f(0)=f(1)$ and $f(L+1) = f(L)$.
The random operator $\mathcal H$ on $l^2(\N_L ,\mathbb R)$ is known as  the \emph{Anderson Hamiltonian} for a single quantum particle \cite{anderson_1958}.  
It is a celebrated result \cite{gol1977pure,kunz_souillard_1980} that its eigenvectors are localized, see \cite{cycon_et_al_1987,damanik_2011,ducatez_2017} for pedagogical expositions. 
\begin{lemma}[Anderson localization, Theorem IX.1 in \cite{kunz_souillard_1980}]\label{lem: anderson localization}
Let $(\psi_k)_{1 \le k \le L}$ be an orthonormal basis of eigenvectors of $\mathcal H$ (depending hence on the realization of $(\omega_x)$). There are $\Const < + \infty$ and $\xi > 0$, such that, for any $L$ and $1 \le x, y \le L$, 
\begin{equation}\label{eq: kunz souillard bound}
	{E} \left( \sum_{k=1}^L |\psi_k(x) \psi_k(y)| \right)
	\; \le \; 
	\Const \ed^{- |x-y|/\xi} 
\end{equation}
\end{lemma}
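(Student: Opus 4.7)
The plan is to follow the fractional moment method of Aizenman--Molchanov, which is well suited to producing the \emph{eigenfunction correlator} form of dynamical localization stated in the lemma. The strategy splits into two steps. \textbf{Step 1:} prove an exponential bound on a fractional moment of the finite-volume Green's function,
\[
E[|G_L(x,y;z)|^s] \;\le\; A\, \ed^{-|x-y|/\xi} \quad \text{for some } s \in (0,1),
\]
uniformly in $z \in \C\setminus\R$ and in $L$, where $G_L(x,y;z) = \langle \delta_x, (\mathcal{H}-z)^{-1}\delta_y\rangle$. \textbf{Step 2:} translate this into the desired decay of $E\bigl(\sum_k |\psi_k(x)\psi_k(y)|\bigr)$.

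For Step 1, the starting point is the \emph{a priori} estimate $\sup_z E[|G_L(x,x;z)|^s] \le C_s < \infty$, valid for any $s \in (0,1)$, which follows from rank-one perturbation theory and the assumption that $\omega_x^2$ has a bounded density. To propagate this local bound into exponential decay along the chain, I would exploit the one-dimensional tridiagonal structure of $\mathcal{H}$: the Schur complement formula factors $G_L(x,y;z)$ through an intermediate site $w$ as a product of two off-diagonal resolvent entries and the on-site Green's function $G(w,w;z)$, and conditioning on $\omega_w$ decouples the factors. Iterating and invoking the a priori bound produces a geometric contraction, and the rate $1/\xi$ is inherited either from a perturbative estimate when $g_0 \ll \omega_-^2$, or, in general, from the positivity of the Lyapunov exponent for the transfer matrices of $\mathcal{H}\psi = \lambda\psi$ as provided by Furstenberg's theorem.

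For Step 2, one uses the identity $\sum_k |\psi_k(x)\psi_k(y)| = \sup_{\|f\|_\infty \le 1} |\langle \delta_x, f(\mathcal{H})\delta_y\rangle|$, with the supremum over Borel functions supported on a deterministic compact spectral interval $I$ (which exists since the $\omega_x$ and the discrete Laplacian are bounded). A standard argument (as in Aizenman--Warzel) then controls $E\bigl(\sum_k |\psi_k(x)\psi_k(y)|\bigr)$ by an integral of $E[|G_L(x,y;\lambda + \id 0)|^s]^{1/s}$ over $\lambda \in I$, yielding the lemma with a slightly smaller decay rate. The main obstacle is carrying out Step 1 at arbitrary $g_0$: the perturbative resolvent expansion converges only at large disorder, so for the general statement one cannot avoid invoking the positivity of the 1D Lyapunov exponent, i.e.\ essentially the Kunz--Souillard spectral averaging argument. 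Once that ingredient is in place, the remainder is bookkeeping.
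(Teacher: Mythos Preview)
The paper does not actually prove this lemma: it is quoted verbatim as Theorem~IX.1 of Kunz--Souillard, with only the parenthetical remark that the compact support of the distribution of $\omega_x$ allows one to take $A=\R$ there. So there is no ``paper's own proof'' to compare against beyond that citation.

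Your outline via the Aizenman--Molchanov fractional moment method is a legitimate and genuinely different route to the same eigenfunction-correlator bound. Kunz--Souillard work directly with products of transfer matrices and a spectral averaging argument to control eigenfunction decay; you instead target the resolvent and then invoke the standard passage from fractional moments of $G_L(x,y;z)$ to the eigenfunction correlator $\sum_k|\psi_k(x)\psi_k(y)|$. Each approach has its merits: the fractional-moment route is more modular and generalizes cleanly to higher dimensions and more complex models, while the Kunz--Souillard argument is self-contained in 1D and does not require the a priori bound / decoupling machinery. You correctly flag the real issue, namely that at arbitrary $g_0$ the iterated resolvent identity alone will not close, and one must feed in the positivity of the Lyapunov exponent (Furstenberg) to get exponential decay of $E[|G_L(x,y;z)|^s]$ uniformly over the spectrum; this is exactly the 1D input that makes either proof work.

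Two small points to tighten. First, the on-site potential here is $\omega_x^2$, not $\omega_x$; you should check that $\omega_x^2$ inherits a bounded density from $\omega_x$ (it does, since the support of $\omega_x$ avoids $0$), as this is what the a priori bound and the decoupling lemma actually use. Second, in Step~2 the bound you wrote, with $E[|G|^s]^{1/s}$ under the integral, is not the standard form; the Aizenman--Warzel eigenfunction-correlator estimate bounds $E\bigl(\sum_k|\psi_k(x)\psi_k(y)|\bigr)$ directly by a constant times $\int_I E[|G_L(x,y;\lambda+\id 0)|^s]\,\dd\lambda$ (no outer $1/s$ power), which is what gives back the same exponential rate rather than a degraded one.
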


\noindent
(since we assume that the distribution of $\omega_x$ is compactly supported, we may take $A=\R$ in Theorem IX.1 in \cite{kunz_souillard_1980}).
We refer to $\xi$ as the localization length of the system, though it is more precisely an upper bound for the localization length uniform in the energy. 
We note that the localization length $\xi$ depends thus only on $g_0$ and on the distribution of $(\omega_x)$.

\subsubsection{The anharmonic chain: $g \ge 0$} 
We finally denote by $\langle \cdot \rangle_\beta$ the expectation with respect with the Gibbs measure at inverse temperature $\beta > 0$, 
i.e.\@ the probability measure on $\R^{2L}$ with density 
\begin{equation}\label{eq: gibbs state disorder chain}
	\rho_\beta(q,p) \; = \; \frac{\ed^{- \beta H(q,p)}}{Z(\beta)}
\end{equation}
where $Z(\beta)$ ensures the normalization (the partition function).    
Note that $\langle j_x \rangle_\beta=0$ for any $x$, because the currents $j_x$ are odd in the $p$-variables.
The thermal conductivity $\kappa$ at inverse temperature $\beta$ is defined by 
\begin{equation}\label{eq: conductivity}
	\kappa
	\; = \; 
	\beta^2 \lim_{t\to\infty} \frac{C(t)}{t}
	\qquad \text{with} \qquad  
	C(t) \; = \;  \limsup_{L\to\infty}
	\bigg\langle \bigg(\int_0^t \dd s \frac{1}{\sqrt{L}}\sum_{x=1}^{L-1} j_x (s) \bigg) ^2\bigg\rangle_\beta \, .
\end{equation}
The expression for $C(t)$ is well-defined because we view $(\omega_x,\tau_x)_{x \in \N_L}$ as restrictions of a single infinite array $(\omega_x,\tau_x)_{x \in\mathbb{N}}$ of i.i.d.\@ random variables, so that $C(t)$ is a limit superior of random variables defined on a single probability space. 
Furthermore, in the definition of $C(t)$, we expect that the limit superior is actually a limit, almost surely, but we are not able to prove this.  
This is however irrelevant for establishing that transport is slower than diffusive, which is equivalent to the fact that {$C(t)/t\to 0$} almost surely, as $t \to \infty$. 
In Theorem~\ref{th: disordered chain} below, we provide an upper bound on $C(t)$
implying slower than diffusive behavior of the transport for sufficiently small $p$. {We recall that $C(t)=\mathcal{O}(f(t)),t \to \infty$ means that $\frac{C(t)}{f(t)}$  is bounded as $t\to\infty$. }

\begin{theorem}\label{th: disordered chain}
Fix any $\beta > 0, g \ge 0, g_0>0$.  If the inequality
$$\gamma \; := \; \frac{4}{1+(3 \xi  \log (\frac1{1-p}) )^{-1}}   \; <\; 1,   $$
holds (i.e.\@ if $p < 1 - \ed^{-1/9\xi}$), with $\xi$ the localization length given in \eqref{eq: kunz souillard bound}, 
then 
$$
	C(t) = \mathcal O( (\log t)^5 t^{ \gamma } ),\qquad t \to \infty,\qquad \text{almost surely}. 
$$
\end{theorem}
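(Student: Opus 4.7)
The plan is to exploit the sparsity of interactions by partitioning the chain along rare, long \emph{harmonic islands}---maximal runs of consecutive sites with $\tau_x=0$---where the evolution is linear and all modes are exponentially localized by Lemma~\ref{lem: anderson localization}. Fix a scale $\ell=\ell(t)$ to be chosen later. A standard Borel--Cantelli argument shows that, almost surely as $L\to\infty$, the chain is partitioned into blocks $B_1,B_2,\ldots$ separated by harmonic islands of length at least $\ell$; since such islands occur independently with density of order $p(1-p)^\ell$, the block sizes are of order $L_\ell:=(1-p)^{-\ell}$ up to polylogarithmic factors. The goal is then to show that the integrated current $\int_0^t\sum_x j_x(s)\,\dd s$ is essentially a sum of weakly correlated block contributions, each with variance polynomial in $t$ and $L_\ell$, so that dividing by $L$ produces a bound of order $t^{c}L_\ell$ for some modest exponent $c$, plus a small residual from transmission through the flanking islands.

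The central technical step is a cross-island decorrelation estimate of the form
\[
|\langle j_x(s)\,j_y(s')\rangle_\beta| \;\lesssim\; \mathrm{poly}(t)\,\ed^{-d(x,y)/\xi},
\]
where $d(x,y)$ is the total length of harmonic islands separating $x$ and $y$. To prove this I would combine two ingredients. (i) Decay of static correlations under the Gibbs measure $\langle\cdot\rangle_\beta$, which follows from the strict convexity of $H$ (noted by the authors as the reason for imposing $\omega_x^2\ge\omega_-^2>0$) via a Brascamp--Lieb or Bakry--\'Emery argument. (ii) Propagation of this decay in time across harmonic regions, using that the equations of motion inside such a region are linear and driven only through its two boundary sites. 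Duhamel's formula expresses the internal evolution by means of the operators $\cos(\sqrt{\mathcal{H}_I}\,t)$ and $\mathcal{H}_I^{-1/2}\sin(\sqrt{\mathcal{H}_I}\,t)$, whose matrix elements inherit, uniformly in $t$, the spatial decay of the spectral projections bounded in Lemma~\ref{lem: anderson localization}.

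With these estimates in hand, one expresses the block-level integrated current via the energy-conservation identity $\int_0^t\sum_{x\in B}j_x(s)\,\dd s=\sum_{x\in B}(x-a_B)(h_x(t)-h_x(0))+$ boundary currents at the flanking harmonic islands. The first summand is bounded pointwise by $L_\ell$ times the total energy in $B$, whose Gibbs fluctuations give a variance controlled by a polynomial in $L_\ell$; the boundary currents are controlled by the localization estimate above and contribute at most $\ed^{-\ell/\xi}$ times polynomial factors in $t$. Summing over blocks and carefully using that correlations between distinct blocks are exponentially suppressed yields a bound of the form
\[
C(t) \;\lesssim\; (\log t)^{O(1)}\bigl(t^{c_1}L_\ell + t^{c_2}\,\ed^{-\ell/\xi}\bigr).
\]
The exponent $\gamma$ then emerges from the optimal choice $\ell(t)=\kappa\log t$ balancing $L_\ell=t^{\kappa\log(1/(1-p))}$ against $\ed^{-\ell/\xi}=t^{-\kappa/\xi}$; tracking constants should recover $\gamma=4/(1+(3\xi\log(1/(1-p)))^{-1})$, with the $(\log t)^5$ prefactor absorbing polylogarithmic losses from the typical block size, from Cauchy--Schwarz steps, and from the limsup in $L$.

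The main obstacle is transferring Anderson localization---inherently a statement about the linear operator $\mathcal{H}$ on finite intervals---to the full anharmonic dynamics under the Gibbs state. The two sides of a harmonic island evolve according to nonlinear equations that drive the island through its boundary sites, and one must argue that the internal Anderson-localized modes damp this forcing by a factor $\ed^{-\ell/\xi}$ uniformly in time and in the chain length $L$. Making this precise requires simultaneous control of Gibbs correlations of nonlinear observables (through strict convexity of $H$) and of the possible growth of the driving boundary trajectories in $t$, and is the technical heart of the proof.
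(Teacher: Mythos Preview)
Your outline shares the right skeleton with the paper---harmonic islands of length $\ell$, Anderson localization inside them, optimization of $\ell(t)\sim\log t$---but it misses the central technical device and substitutes a much harder one.

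The paper never proves a dynamical cross-island decorrelation estimate. Instead, inside each harmonic block $B$ centered at a good point $x$, it constructs an \emph{explicit approximate solution of the Poisson equation}
\[
\mathcal{L} u_x \;=\; j_x + f_x,
\]
by taking $u_x = H_{\mathrm L}-\widetilde H_{\mathrm L}$, where $\widetilde H_{\mathrm L}$ is the left part of the energy written in the eigenmode basis $(\psi_k)$ of the restricted Anderson operator $\mathcal H_B$. Since $\{H_B,\widetilde H_{\mathrm L}\}=0$, the remainder $f_x$ arises only from the boundary coupling of $B$ to its exterior, and Lemma~\ref{lem: anderson localization} makes its coefficients $\mathcal O(\ed^{-\ell/\xi})$. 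The point of this construction is that the time integral of $\mathcal L v$ is just $v(t)-v(0)$, so once one writes $\sum_x j_x = -\sum_x f_{x_G} + \mathcal L(\text{something})$, only the tiny residuals $f_{x_G}$ pick up a factor of $t^2$ via Cauchy--Schwarz; the rest reduces to \emph{static} Gibbs correlations handled by Lemma~\ref{lem: decay of correlations}. No propagation estimate, no Duhamel, no control of nonlinear boundary forcing is needed.

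Your proposed route replaces this with a bound of the type $|\langle j_x(s)j_y(s')\rangle_\beta|\lesssim \mathrm{poly}(t)\,\ed^{-d(x,y)/\xi}$ obtained by Duhamel through the island. You correctly flag this as ``the main obstacle'' and ``the technical heart'', but you do not prove it, and it is genuinely hard: the harmonic island is driven at both ends by the full nonlinear dynamics, and one must show that this forcing does not transmit current across the island uniformly in $t$ and $L$. This is exactly the difficulty the Poisson-equation trick sidesteps. As stated, your argument has a gap here.

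A secondary issue: your schematic bound $C(t)\lesssim t^{c_1}L_\ell + t^{c_2}\ed^{-\ell/\xi}$ does not match the actual balance. In the paper the ``small'' term carries simultaneously the factor $t^2$, the localization factor $\ed^{-2\ell/\xi}$, \emph{and} powers of the typical gap size $d_0\sim (1-p)^{-3\ell}$ (because each $f_{x_G}$ is repeated $\sim d_i$ times), while the ``large'' term carries no $t$ at all but a higher power $d_0^4$. Getting $\gamma=4/(1+(3\xi\log\tfrac{1}{1-p})^{-1})$ requires tracking all three factors together; the two-term balance you wrote would not recover it.
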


\begin{remark}
For $p=0$ (equivalently $g=0$) our bound gives $\gamma=0$. However, in that case the dynamics is localized and it is known that $C(t) = \mathcal O (1)$ as $t \to \infty$, see the remarks below Theorem~1 in~\cite{bernardin_huveneers_2013}. 
\end{remark}

\begin{remark}
By direct adaptation of the proof, the on-site potential $q \mapsto g q_x^4$ may be replaced by a more general potential $q \mapsto V(\theta_x q)$, 
where $V$ is a smooth, convex, local function, growing at most polynomially at infinity as well as all its derivatives,
and where $\theta_x q$ denotes the shifted configuration by $x$.
\end{remark}

\subsection{Disordered Fermion chain}  \label{sec: fermion chain}

This model differs from the one discussed above because it is quantum and, more importantly, 
it is believed to have a genuinely localized interacting phase (MBL) for some range of parameters. 

To define the fermion chain, we need the fermionic creation and annihilation operators  $c^{\dagger}_x,c_x$ that satisfy the \emph{canonical anticommutation relations}
$$
	[  c^{\phantom{\dagger}}_x, c^{\dagger}_{x'} ]_+ =\delta_{x,x'},\qquad  [  c^{\phantom{\dagger}}_x, c^{\phantom{\dagger}}_{x'} ]_+= [  c^{\dagger}_x, c^{\dagger}_{x'} ]_+=0
$$
where $[ a,b]_+=ab+ba$ is the anticommutator and $x,x' \in \N_L$.   We refer again to standard texts, e.g.\ \cite{alicki_fannes_2001}, for an explicit representation of $c^{\phantom{\dagger}}_x,c_x^\dagger$ as operators on the appropriate finite-dimensional Hilbert space $\mathcal{F}_L$.  
We also introduce the fermionic number operators $n_x=c^{\dagger}_x c^{\phantom{\dagger}}_{x}$ counting the number of fermions ($0$ or $1$) at site $x$.  
The Hamiltonian models interacting spinless fermions: 
\begin{equation}\label{eq: fermionic chain}
	H \; = \; \sum_{x=1}^{L-1}    \big( J (c^\dagger_x c^{\phantom{\dagger}}_{x+1}+  c^\dagger_{x+1} c^{\phantom{\dagger}}_x)     +  g\tau_x n_x n_{x+1} \big) + \sum_{x=1}^L\omega_x n_x 
\end{equation}
where the random variables $\omega_x,\tau_x$ have the same law as those in Section \ref{sec: classical chain}.

This model can be mapped by the Jordan-Wigner transformation to a disordered variant of the XXZ-model:
\begin{equation}\label{eq: xxz}
	H 
	\; = \; 
	\sum_{x=1}^{L-1}   \big( J (S^{(1)}_x S^{(1)}_{x+1}  +   S^{(2)}_x S^{(2)}_{x+1} ) +  g\tau_x S^{(3)}_{x} S^{(3)}_{x+1} \big) 
	+ \sum_{x=1}^L \omega_x S^{(3)}_{x}.
\end{equation}
In particular, setting $\tau_x=1$ yields back the disordered XXZ chain introduced in \eqref{eq: true xxz}.
For reasons of convenience, we work however with the fermionic chain. 

\subsubsection{Dynamics} \label{sec: dynamics}
In general, by observables, we mean any Hermitian  operator $a$ acting on $\mathcal{F}_L$. 
The dynamics of observables is given by the Heisenberg evolution equation 
$$
	\frac{\dd}{\dd t} a_t \; = \; \mathcal L a_t \; := \;  \id [H,a]  ,   \qquad  a_0=a
$$ 
The model, has two conserved quantities, the energy, i.e.\ the Hamiltonian itself and particle number $N=\sum_x n_x$. Indeed, one checks easily that $[H,N]=0$.  For simplicity, we discuss only transport of the particle number. 
The natural particle current operator $j_x$
is defined by 
$$
j_x= \id [  J (c^\dagger_x c_{x+1}+  c^\dagger_{x+1} c_x),n_x]= \id J (c^\dagger_{x+1} c_x-c^\dagger_x c_{x+1})
$$ 
and it satisfies the continuity equation
$$
	\mathcal L n_x \; = \; - \nabla j_x \; = \; j_{x-1} - j_x \, .
$$

As for the equilibrium state of the system, we choose the  density matrix $\rho_{\mu} \propto  \ed^{-\mu N} $.  The expectation value of observables in this state is then given by
\begin{equation}\label{def: quantum state}
\langle a \rangle_\mu = \frac{\mathrm{Tr} (\rho_{\mu} a)}{\mathrm{Tr} (\rho_{\mu})} 
\end{equation}
where  $\mathrm{Tr}(\cdot)$ is the trace over the  Hilbert space $\mathcal{F}_L$.
Just as in Section \ref{sec: classical chain}, there are no subtle mathematical issues here because our setup is at finite volume. 
Note finally that $\langle j_x \rangle_\mu=0$. This follows for example by considering the gauge transformation $c_x \to \ed^{\id \theta_x} c_x, c^\dagger_x \to \ed^{-\id \theta_x} c^\dagger_x $, under which the state $\langle \cdot\rangle_\mu$ is invariant.  

\subsubsection{The non-interacting chain}
When $g = 0$, the Hamiltonian reduces to 
\begin{equation}\label{eq: hb as free}
H=\sum_{ x=1 }^{L-1}    J (c^{\dagger}_{x+1 c_x }+   c^{\dagger}_x c_{x+1})    +  \sum_{x=1}^L\omega_x n_x 
= \sum_{x,x' \in \N_L}  c^\dagger_{x'} \widetilde{ \mathcal{H}}_{x,x'}   c_x
\end{equation}
where 
$$ \widetilde{ \mathcal{H}} = J\Delta +2J   + V,
$$ 
with $V_{x,y}=\omega_x\delta_{x,y}$, is again the Anderson Hamiltonian. It only differs from $\cal H$ of Section \ref{sec: classical chain} by irrelevant constants and a different distribution of on-site disorder.  Obviously, Lemma \ref{lem: anderson localization}, formulated for $\mathcal{H}$, applies to $\widetilde{ \mathcal{H}}$ just as well and it yields an upper bound $\xi$ for the localization length that depends only on $J$ and on the distribution of $\omega_x$.
The equality in \eqref{eq: hb as free} means that the Hamiltonian $H$ is the \emph{second quantization} of $\widetilde{\mathcal H}$, or, in more intuitive terms, that $H$ describes no-interacting fermions.

Just like in the classical anharmonic chain, we consider the current-current correlation
\begin{equation}\label{eq: fermion conductivity}
	C(t) \; = \;  \limsup_{L\to\infty}
	\bigg\langle \bigg(\int_0^t \dd s \frac{1}{\sqrt{L}}\sum_{x=1}^{L-1} j_x (s) \bigg) ^2\bigg\rangle_\mu \, .
\end{equation}
The slower than diffusive behavior of the dynamics is equivalent to the fact that $C(t)/t \to 0$ as $t \to \infty$.   In Theorem~\ref{th: disordered fermion chain} below, we provide an upper bound on $C(t)$ guaranteeing slower than diffusive dynamics for sufficiently small $p$.

\begin{theorem}\label{th: disordered fermion chain}
Fix any $\mu > 0, g \ge 0, J>0$.  If the inequality
$$\gamma \; := \; \frac{4}{1+(3 \xi  \log (\frac1{1-p}) )^{-1}}   \; <\; 1,   $$
holds (i.e.\@ if $p < 1 - \ed^{-1/9\xi}$), with $\xi$ the localization length given in \eqref{eq: kunz souillard bound}, then the dynamics is slower than diffusive: 
$$C(t) = \mathcal O((\log t)^5t^{ \gamma } ),\qquad t \to \infty,\qquad \text{almost surely}. $$ 
\end{theorem}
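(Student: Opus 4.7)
The plan is to mirror the classical argument for Theorem \ref{th: disordered chain}, adapted to the Heisenberg picture and the second quantization of $\widetilde{\mathcal H}$. The starting point is a dipole identity: since $\mathcal L n_x = j_{x-1}-j_x$ and $j_0=j_L=0$, summation by parts gives
\[
\int_0^t \sum_{x=1}^{L-1} j_x(s)\,\dd s \;=\; X(t)-X(0), \qquad X\;:=\;\sum_{x=1}^L x\, n_x ,
\]
which reduces the theorem to bounding $L^{-1}\langle (X(t)-X(0))^2\rangle_\mu$.

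Next, I would fix a cutoff $N_\ast:=\lceil c_1\xi\log t\rceil$, with $c_1$ to be tuned, and call a maximal run of $\ge N_\ast$ consecutive $\tau=0$ bonds a \emph{long barrier}. Removing the hopping on the midpoint bond of every long barrier yields a block-diagonal truncated Hamiltonian $\tilde H=\sum_{k=1}^M\tilde H^{(k)}$ on blocks $B_1,\ldots,B_M$ that partition $\N_L$. Under $\tilde H$ the block charges $\tilde Q_k=\sum_{x\in B_k}n_x$ are exactly conserved and distinct blocks evolve on disjoint subspaces; since $\langle\cdot\rangle_\mu$ factorises over sites, the increments $\tilde X_k(t)-\tilde X_k(0)$ (with $\tilde X_k:=\sum_{x\in B_k}x\,n_x$) are mutually independent in this state. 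The operator inequality $|\tilde X_k(t)-\tilde X_k(0)|\le|B_k|\tilde Q_k\le|B_k|^2$ then gives $\langle (\tilde X(t)-\tilde X(0))^2\rangle_\mu\le C\sum_k|B_k|^4$; the block-size distribution has a geometric tail of rate $(1-p)^{N_\ast}=t^{-c_1\xi\log(1/(1-p))}$, so $L^{-1}\mathbb E\sum_k|B_k|^4\lesssim (\log t)^{a}\,t^{3c_1\xi\log(1/(1-p))}$.

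The quantum heart of the argument is comparing $X(t)$ under $H$ with $\tilde X(t)$ under $\tilde H$. Iterating Duhamel's formula in the perturbation $\tilde V:=H-\tilde H$---a sparse sum of \emph{quadratic} hopping bonds, each at the midpoint of a long barrier---converts nested commutators into matrix elements of the single-particle propagator $\ed^{\id s\widetilde{\mathcal H}}$ via second quantization. Since each removed bond lies $\ge N_\ast/2$ deep inside a non-interacting stretch, Lemma \ref{lem: anderson localization} bounds the mean amplitude for a signal to cross that stretch by $C\ed^{-N_\ast/(2\xi)}$; a careful summation of the Duhamel series (using the sparsity of interacting bonds to control the combinatorics) then gives $\mathbb E\|n_x(t)-\tilde n_x(t)\|\lesssim t^{O(1)}\ed^{-N_\ast/(2\xi)}$. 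Choosing $c_1$ to balance this error against the block-size exponent reproduces the stated $\gamma=4/(1+(3\xi\log(1/(1-p)))^{-1})$, with the $(\log t)^5$ prefactor absorbing polynomial corrections. The $\limsup_L$ and the almost-sure statement in time are then handled by a Markov inequality followed by Borel--Cantelli on a dyadic sequence of times, exactly as for the classical chain.

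The main obstacle is the Duhamel estimate just described. A naive Lieb--Robinson bound contributes a useless $\ed^{vt-N_\ast}$ when $t\gg N_\ast/v$; one must exploit that \emph{every} bond of $\tilde V$ lies inside a locally non-interacting stretch of length $\ge N_\ast$, so that the single-particle dynamics relevant to the Duhamel insertions is the Anderson-localized $\widetilde{\mathcal H}$ with a time-independent localization length $\xi$. Managing the combinatorics of multi-bond insertions while preserving the exponential suppression---and keeping the polynomial-in-$t$ prefactor under control---is the technical core of the proof, and the quantum analogue of the mode stationarity used classically.
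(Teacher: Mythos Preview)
Your proposal is structurally different from the paper's proof and contains a genuine gap at its ``technical core''. The paper never compares two dynamics. Instead, at each site $x$ lying in the middle of a long non-interacting stretch $B$, it builds a \emph{static} approximate conserved quantity $u_x=N_{\mathrm L}-\widetilde N_{\mathrm L}$ from the single-particle eigenfunctions of $\widetilde{\mathcal H}_B$, so that $\mathcal L u_x=j_x+f_x$ with $f_x=\id[H_{\partial B},u_x]$ supported at $\partial B$ and with coefficients bounded (in expectation) by $\ed^{-\ell/\xi}$ via Lemma~\ref{lem: anderson localization}. For sites outside these regions one uses $j_x=j_{x_G}\pm\mathcal L(\sum n_y)$. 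The time integral of every $\mathcal L(\cdot)$ telescopes to $v(t)-v(0)$ and is bounded by $\langle v^2\rangle_\mu$ \emph{independently of $t$}; only the tiny $f$-part receives the crude $t^2$ factor. No Duhamel, no dynamical comparison, no Lieb--Robinson issue arises: the only place single-particle localization enters is the static computation of the coefficients $\gamma_{z,w}$ of $u_x$.

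Your Duhamel step, by contrast, does not reduce to single-particle propagators. The identity
\[
\ed^{\id tH}a\,\ed^{-\id tH}=\ed^{\id t\tilde H}a\,\ed^{-\id t\tilde H}+\id\int_0^t \ed^{\id(t-s)H}\big[\tilde V,\ed^{\id s\tilde H}a\,\ed^{-\id s\tilde H}\big]\ed^{-\id(t-s)H}\,\dd s
\]
involves the \emph{many-body} evolutions $\ed^{\id s\tilde H}$ and $\ed^{\id(t-s)H}$. Inside each block, $\tilde H^{(k)}$ contains the quartic terms $g\tau_y n_y n_{y+1}$ at all sites with $\tau_y=1$, so $\tilde H$ is not a second quantization of $\widetilde{\mathcal H}$ and nested commutators do not collapse to matrix elements of $\ed^{\id s\widetilde{\mathcal H}}$. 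Once $s$ exceeds $\sim N_\ast$ the evolved $\tilde V$ leaks out of the barrier into the interacting bulk, and Anderson localization alone gives no control; the outer $\ed^{\id(t-s)H}$ is fully interacting from the start. Your own diagnosis that a Lieb--Robinson bound is ``useless'' here is correct, and the proposed cure---appealing to $\ed^{\id s\widetilde{\mathcal H}}$---is not available. This is precisely the difficulty the paper's Poisson-equation approach sidesteps: it uses localization only once, statically, to make $f_x$ small, and never needs to propagate anything through a barrier.
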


\section{Proof of Theorem~\ref{th: disordered chain}}\label{sec: proof disordered}

For $p=0$, the chain is harmonic and it is known that $C(t) = \mathcal O(1)$ as $t\to \infty$ (see the remarks below Theorem~1 in~\cite{bernardin_huveneers_2013}). 
From now on, we assume $p >0$, hence the model is non-integrable.   Throughout the proof, constants $\Const$ can always depend on $\beta,g,g_0$ and the distribution of $\omega_x$ (unless stated otherwise),  but not on the length $L$. 

\subsection{Decay of static correlations}

We need decay of spatial correlations in the Gibbs state $\langle \cdot \rangle_\beta$. 
Given a function $u$ on $\R^{2L}$, we denote by $\mathrm{supp}(u)\subset \N_L$ the smallest set of points such that 
$u$ is constant as a function of $q_x, p_x$ if $x \notin \mathrm{supp}(u)$ (it is thus not the usual support of a function in $\R^{2L}$).
Given two functions $u,v$, we denote by $d(u,v)\ge 0$ the distance between $\mathrm{supp}(u)$ and $\mathrm{supp}(v)$. 
If $u$ denotes a smooth function on $\R^{2L}$, we write 
$$
	\nabla u = (\nabla_q u , \nabla_p u) = \big( (\partial_{q_x} u )_{1 \le x \le L} , (\partial_{p_x} u )_{1 \le x \le L} \big) . 
$$
The Lemma below yields the needed decay of correlations; it is a special case of more general results stated in \cite{helffer_1998,helffer_1999}, 
see also \cite{bodineau_helffer_2000} for a review.
 
\begin{lemma}[decay of correlations]\label{lem: decay of correlations}
1. Given $r\in \N$, there exists $\Const_r< + \infty$ such that, for any $1\le x \le L$, 
$$
	|\langle q_x^r\rangle_\beta |\le \Const_r, \qquad |\langle p_x^r\rangle_\beta |\le \Const_r.
$$
2. There exist $\Const,\zeta<+\infty$ such that, given two polynomials $u,v$ on $\R^{2L}$ with $\langle u \rangle_\beta = \langle v \rangle_\beta = 0$,
$$
	|\langle u v \rangle_\beta|
	\; \le \; 
	\Const \ed^{- d(u,v)/\zeta} \langle\nabla u \cdot \nabla u\rangle_\beta^{1/2}\langle\nabla v \cdot \nabla v\rangle_\beta^{1/2}
$$
These bounds hold almost surely and the constants  $\Const, \Const_r, \zeta$ are independent of $g$. 
\end{lemma}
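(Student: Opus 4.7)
The plan is to exploit the strict convexity of $H$ in $q$ together with the Gaussian nature of the $p$-marginal, then apply the Helffer--Sj\"ostrand covariance formula to reduce the decay estimate to a Combes--Thomas bound on the inverse Hessian.

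\emph{Reduction.} Since $H = \sum_x p_x^2/2 + U(q)$ with $U(q) = \sum_x (\omega_x^2 q_x^2/2 + g\tau_x q_x^4/4 + g_0(q_{x+1}-q_x)^2/2)$, the Gibbs measure factorises as the product of i.i.d.\@ centred Gaussians with variance $1/\beta$ on the momenta and the density $\propto \ed^{-\beta U(q)}$ on the positions. The potential $U$ is strictly convex with
$$\mathrm{Hess}\,U(q) \; = \; D(q) - g_0 \Delta, \qquad D(q)_{xx} \; = \; \omega_x^2 + 3 g \tau_x q_x^2 \; \ge \; \omega_-^2,$$
uniformly in $q$, $g$, $\tau$ and the realisation of $\omega$; in particular $\mathrm{Hess}\,U(q) \ge B_0 := \omega_-^2 I - g_0 \Delta$, a deterministic tridiagonal operator.

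\emph{Moments (part 1).} The $p_x$-moment bound is trivial from Gaussianity. The symmetry $q \mapsto -q$ of $U$ kills all odd $q_x$-moments, and the Bakry--\'Emery criterion applied to the log-concave density $\ed^{-\beta U}$ with $\mathrm{Hess}(\beta U) \ge \beta \omega_-^2 I$ yields a log-Sobolev inequality with constant of order $\beta \omega_-^2$. This gives sub-Gaussian concentration for every $1$-Lipschitz function; specialising to $f(q) = q_x$ delivers $\langle q_x^r \rangle_\beta \le \Const_r$ uniformly in $L$, $x$, $g$, $\tau$ and $\omega$.

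\emph{Decay (part 2).} I invoke the Helffer--Sj\"ostrand covariance identity for the log-concave measure $\mu \propto \ed^{-\beta H}$:
$$\langle u v \rangle_\beta - \langle u \rangle_\beta \langle v \rangle_\beta \; = \; \int \nabla u \cdot (\mathcal W^{(1)})^{-1} \nabla v \, \dd \mu,$$
where $\mathcal W^{(1)}$ is the Witten Laplacian on $1$-forms; its multiplication part is the block-diagonal matrix with blocks $\beta B(q)$ and $\beta I$ on the $q$- and $p$-degrees of freedom respectively. A Combes--Thomas argument (conjugate $B(q)$ by $\ed^{\eta x}$ and pick $\eta > 0$ small enough that the correction to the diagonal, namely $g_0(\ed^\eta + \ed^{-\eta} - 2)$, is absorbed by the gap $\omega_-^2$) yields
$$|B(q)^{-1}_{x,y}| \; \le \; \Const\, \ed^{-|x-y|/\zeta_0}$$
uniformly in $q$, with $\zeta_0$ depending only on $\omega_-$ and $g_0$. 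The theory developed in \cite{helffer_1998,helffer_1999,bodineau_helffer_2000} transfers this fibrewise decay to a corresponding exponential off-diagonal decay of the kernel of $(\mathcal W^{(1)})^{-1}$ acting on vector fields, with constants depending only on $\omega_-$, $g_0$, $\beta$. Since $\nabla u$ and $\nabla v$ are supported respectively on $\mathrm{supp}(u)$ and $\mathrm{supp}(v)$, at distance $d(u,v)$, one application of Cauchy--Schwarz then produces the stated bound.

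The technical crux is the last step: lifting the uniform-in-$q$ exponential decay of $B(q)^{-1}$ to a covariance bound featuring the $\langle \nabla u \cdot \nabla u \rangle_\beta^{1/2} \langle \nabla v \cdot \nabla v \rangle_\beta^{1/2}$ factors rather than a crude supremum estimate over the configuration. This is exactly the content of the Helffer--Sj\"ostrand framework of the cited papers, under which the lemma sits as a direct specialisation; the remaining verification is that all constants indeed remain uniform as $g$ varies in $[0,\infty)$, which is guaranteed because increasing $g$ only improves the lower bound on $\mathrm{Hess}\,U(q)$.
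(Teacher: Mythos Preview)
Your proposal is correct and, for Part~2, follows essentially the same route as the paper: both invoke the Helffer--Sj\"ostrand machinery of \cite{helffer_1998,helffer_1999,bodineau_helffer_2000}, using the uniform lower bound $\mathrm{Hess}\,(\beta H)\ge \beta\min(1,\omega_-^2)$ to obtain exponential decay of correlations with constants independent of $g$. Your exposition supplies more of the underlying mechanism (Witten Laplacian, Combes--Thomas), but the substance is the same citation.

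The one genuine difference is in Part~1. The paper proves Part~2 first and then bootstraps the moment bounds: with $u=v=q_x$ it gets $\langle q_x^2\rangle_\beta\le\Const$, and then recursively $\langle q_x^r\rangle_\beta\le\Const\langle q_x^{r-2}\rangle_\beta$ using Part~2 again. You instead go via Bakry--\'Emery $\Rightarrow$ log-Sobolev $\Rightarrow$ sub-Gaussian concentration for the Lipschitz coordinate $q_x$. Both arguments rest on the same uniform convexity input; yours is self-contained and does not require Part~2 as a lemma, while the paper's recursion is slightly more economical in that it reuses the correlation bound already established. Either is perfectly adequate here.
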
 

\begin{proof}
Polynomials are obviously integrable with respect to the measure $\mu_\beta$ at fixed length $L$. 
Let us start with Part 2, following Section 4 in \cite{helffer_1998}.
We first notice that there is no need for the restriction $u=q_i$ or $u=p_i$ and $v=q_j$ or $v=p_j$ for some $1 \le i,j \le L$ in \cite{helffer_1998} 
(see also the statement of Theorem 3.1 in \cite{bodineau_helffer_2000}).
Next, the function
$$
	(q,p) \; \mapsto \;
	\beta \sum_{x=1}^L \left( \frac{p_x^2}{2} + \omega_x^2 \frac{ q_x^2}{2} + g \tau_x \frac{q_x^4}{4} + g_0  \frac{(q_{x+1} - q_x)^2}{2} \right)
$$ 
is strictly convex with a Hessian bounded below by $\beta \min (1, \omega_-^2)$.
From there, the result follows as in \cite{helffer_1998}, if $\zeta$ is taken large enough compared to $g_0$ and $g_0/\omega_-^2$. 
Let us next consider Part 1. 
We notice that $\langle p_x^r\rangle_\beta = (\beta/2 \pi)^{1/2} \int_\R  y^r \ed^{- \beta y^2 /2} \dd y$, 
so that we only need to deal with the bound on $\langle q_x^r\rangle_\beta$.
It is enough to consider $r$ even since $\langle q_x^r\rangle_\beta = 0$ otherwise. 
For $r=2$, the result follows from part 2 of the Lemma taking $u=v=q_x$. 
For $r>2$, we obtain similarly $\langle q_x^r \rangle_\beta \le \Const \langle q_x^{r-2}\rangle_\beta$, and the result is obtained recursively. 
\end{proof}

\subsection{Splitting of the harmonic Hamiltonian}\label{sec: splitting of harmonic}

We now exploit the fact that the system is an Anderson insulator in the regions where the anharmonic potential is absent, i.e.\@ where $\tau=0$. The upshot is Lemma \ref{lem: bounds on gamma}, whose significance is explained below. 
The material below is largely taken from  \cite{bernardin_huveneers_2013}, 
but we have tried to streamline the presentation to make the reasoning more transparant. 

Throughout this section, we fix a realization of $(\tau_x)_{x \ge 1}$. 
Given $\ell \in \N$, let
\begin{equation}\label{eq: G not set}
	G_0(\ell) 
	\; = \;
	\{ 
	x \in \N \, : \,  \tau_y = 0 \text{ for all $y \in \N $ s.t.\  } |y-x|\le \ell 
	\} \, .
\end{equation} 
We also fix an element $x\in G_0(\ell)$ and we denote $B=\{x-\ell,\ldots, x+\ell\}$, such that $\tau_y=0$ for $y \in B$.
Until the end of this section \ref{sec: splitting of harmonic}, the expectation $\econd$ is assumed to be conditioned on $\tau_y=0, y \in B$ and we do not repeat this. 
\\
 
In general, let $H_X$ be the Hamiltonian restricted to a finite set $X \subset \N$, i.e.\ retaining only terms in \eqref{eq: disordered chain} whose support is in $X$. Similarly,  we define the restriction of
the Anderson operator  $\mathcal{H}_X= \sum_{x,y \in X} \mathcal{H}_{x,y}$ and of the
Liouville operator $\cal L_X=\{H_X,\cdot\}$.

\subsubsection{A priori left-right splitting}
We first consider an obvious splitting $H_B= H_{\mathrm{L}} + H_{\mathrm{R}}$ into a left ($\mathrm{L}$) and right ($\mathrm{R}$) part with respect to the midpoint $x$:
$$
 H_\mathrm{L} \; = \;H_{\{y\in B: y<x\}},\qquad  H_{\mathrm{R}}   \; = \;	H_B - H_{\mathrm{L}}.
$$
Note that hence $H_{\mathrm{R}} = H_{\{y\in B: y\geq x\}} + g_0  \tfrac12 {(q_{x} - q_{x-1})^2}$
The functions $H_{\mathrm{L}},H_{\mathrm{R}}$ do not commute with $H_B$ in general, reflecting the fact that energy can be transported. 

Our aim in this section is to find a modified left-right splitting $H_B=\widetilde{H}_\mathrm{L}+\widetilde{H}_\mathrm{R}$ which does satisfy the invariance property $\{H_B,\widetilde{H}_{\mathrm{L},\mathrm{R}}\}=0$, reflecting the spatial localization of energy.

\subsubsection{An invariant splitting}
 Let $(\psi_k)_{k \in \mathcal{I}}$ be an orthonormal basis of real eigenfunctions of $\mathcal{H}_B$, with $\mathcal{I}$ an index set, $|\mathcal{I}|=|B|$, and let $(\nu_k^2)$ be the corresponding eigenvalues, positive due to the positive-definitiveness of $\mathcal{H}$.  For any $k \in  \mathcal I$, the function
\begin{equation}\label{eq: local e}
	e_k \; = \; \frac12 \left( \langle p,\psi_k \rangle^2 + \nu_k^2 \langle q, \psi_k \rangle^2 \right)
\end{equation}
represents the energy of eigenmode $k$. 
The functions $e_k$ have two remarkable properties, following from standard considerations, 
\begin{enumerate}
\item   $
	\mathcal L_B e_k \; = \; 0 $   for all $ k \in \mathcal{I}$. 
\item   $H_B=\sum_k e_k$.  
\end{enumerate}
The splitting $H_B=\widetilde{H}_{\mathrm{L}} + \widetilde{H}_{\mathrm{R}} $ that we propose is defined by
\begin{equation}\label{eq: expression tilde}
\widetilde{H}_{\mathrm{L}} =\sum_{y}   \chi(y<x)\sum_{k} |\psi_k(y)|^2  e_k ,\qquad    \widetilde{H}_{\mathrm{R}} =\sum_{y}  \chi(y \geq x)\sum_{k}|\psi_k(y)|^2
 e_k 
 \end{equation}
 From the above properties, it is clear that this is indeed a splitting and that $\{H_B,\widetilde{H}_{\mathrm{L},\mathrm{\mathrm{R}}}\}=0 $, but it is not clear a-priori in which sense this splitting is similar to $H_B=H_{\mathrm{L}}+H_{\mathrm{R}} $ and we exhibit this now.
Note first that both $H_{\mathrm{L},\mathrm{R}},\widetilde{H}_{\mathrm{L},\mathrm{R}} $ are linear combinations of $p_zp_w, q_zq_w$ with $z,w \in B$. 
Let us call 
\begin{equation} \label{eq: difference splittings}
\widetilde{H}_{\mathrm{L}}- H_{\mathrm{L}} = \sum_{z,w \in B} \left(\gamma^{(\mathrm{L})}_{z,w} p_zp_w + \alpha^{(\mathrm{L})}_{z,w} q_z q_w \right), \qquad   \widetilde{H}_{\mathrm{R}}- H_{\mathrm{R}} = \sum_{z,w \in B} \left(\gamma^{(\mathrm{R})}_{z,w} p_zp_w + \alpha^{(\mathrm{R})}_{z,w} q_z q_w \right)
\end{equation}
with the coefficients $\gamma^{(\mathrm{L},\mathrm{R})}_{z,w},\alpha^{(\mathrm{L},\mathrm{R})}_{z,w}$ chosen symmetric in $z,w$. 
The functions \eqref{eq: difference splittings} are localized around $x$ in the sense that $|\gamma_{z,w}|$ typically decay exponentially in $|z-x|+|w-x|$. We will not state this in full generality because the following lemma suffices for our purposes.
\begin{lemma}\label{lem: bounds on gamma}

\begin{enumerate}
\item For any $z,w$, $|\gamma^{(\mathrm{L},\mathrm{R})}_{z,w}| <\Const $ and $|\alpha^{(\mathrm{L},\mathrm{R})}_{z,w}| <\Const$
\item  If one of $z,w$ is at the boundary of $B$, (i.e.\ at distance $1$ of $B^c$), then 
$$
\econd(|\gamma^{(\mathrm{L},\mathrm{R})}_{z,w}|+|\alpha^{(\mathrm{L},\mathrm{R})}_{z,w}|)   \leq \Const e^{-\ell/\xi}
$$
\end{enumerate}
\end{lemma}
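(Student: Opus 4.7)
The plan is to write $\widetilde H_L$ and $H_L$ as explicit quadratic forms in $(p,q)$ and compare their coefficients. Since both splittings sum to $H_B$, we have $\widetilde H_R - H_R = -(\widetilde H_L - H_L)$, so it suffices to treat the left splitting. First I would expand $e_k$ via \eqref{eq: local e} and substitute into \eqref{eq: expression tilde} to obtain, for all $z,w \in B$,
\[
\widetilde\gamma_{z,w} \;=\; \tfrac12 \sum_{k \in \mathcal I} \lambda_k\, \psi_k(z)\psi_k(w), \qquad \widetilde\alpha_{z,w} \;=\; \tfrac12 \sum_{k \in \mathcal I} \lambda_k\, \nu_k^2\, \psi_k(z)\psi_k(w),
\]
with $\lambda_k := \sum_{y<x}|\psi_k(y)|^2 \in [0,1]$. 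The coefficients $\gamma^0_{z,w}, \alpha^0_{z,w}$ of $H_L$ are read off directly from \eqref{eq: disordered chain}: they vanish outside the left half of $B$, and are otherwise explicit, uniformly bounded quantities.

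Part 1 is then immediate. From $\lambda_k \le 1$, the spectral bound $\nu_k^2 \le \|\mathcal H_B\| \le \Const$ (since $\omega_y$ has compact support and $\|\Delta\|\le 4$), and the Cauchy--Schwarz consequence of completeness
\[
\sum_k |\psi_k(z)\psi_k(w)| \;\le\; \Bigl(\sum_k |\psi_k(z)|^2\Bigr)^{1/2}\Bigl(\sum_k |\psi_k(w)|^2\Bigr)^{1/2} \;=\; 1,
\]
one gets $|\widetilde\gamma_{z,w}|,|\widetilde\alpha_{z,w}|\le \Const$. Subtracting the bounded $\gamma^0, \alpha^0$ yields the first claim.

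For Part 2, I would use completeness in the sharper spectral form $\sum_k \psi_k(z)\psi_k(w) = \delta_{z,w}$ and $\sum_k \nu_k^2 \psi_k(z)\psi_k(w) = (\mathcal H_B)_{z,w}$. Writing $\lambda_k = 1 - \sum_{y\ge x}|\psi_k(y)|^2$ gives
\[
\widetilde\gamma_{z,w} \;=\; \tfrac12 \delta_{z,w} - \tfrac12 \sum_{y\ge x}\sum_k |\psi_k(y)|^2\, \psi_k(z)\psi_k(w),
\]
and similarly for $\widetilde\alpha_{z,w}$ with $\tfrac12(\mathcal H_B)_{z,w}$ in place of $\tfrac12 \delta_{z,w}$ and an extra $\nu_k^2$ in the sum. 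A case check shows that when $z$ is at the boundary of $B$, the constant pieces $\tfrac12 \delta_{z,w}$ and $\tfrac12 (\mathcal H_B)_{z,w}$ match $\gamma^0_{z,w}$ and $\alpha^0_{z,w}$ respectively (for very small $\ell$ a bounded edge discrepancy remains, which is absorbed into $\Const$ since $\ed^{-\ell/\xi}$ is then $O(1)$). What remains is
\[
\gamma^{(\mathrm L)}_{z,w} \;=\; -\tfrac12 \sum_{y\ge x}\sum_k |\psi_k(y)|^2 \psi_k(z)\psi_k(w),
\]
and similarly for $\alpha^{(\mathrm L)}_{z,w}$. Using $|\psi_k(w)|\le 1$, $|\psi_k(y)|^2 \le |\psi_k(y)|$, and $\nu_k^2 \le \Const$ for the $\alpha$ piece,
\[
|\gamma^{(\mathrm L)}_{z,w}|+|\alpha^{(\mathrm L)}_{z,w}| \;\le\; \Const \sum_{y\ge x}\sum_k |\psi_k(y)\psi_k(z)|.
\]
Taking the conditional expectation $\econd$ and applying Lemma \ref{lem: anderson localization} to $\mathcal H_B$ --- legitimate because $(\omega_y)$ is independent of $(\tau_y)$, so the conditioning does not alter the law of the $\omega$'s on $B$, and the Kunz--Souillard bound is uniform in the size of the Anderson operator --- yields $\econd\sum_k|\psi_k(y)\psi_k(z)| \le \Const\, \ed^{-|y-z|/\xi}$. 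For $z=x-\ell$ and $y\ge x$ we have $|y-z|\ge \ell$, so the geometric sum over $y \in B$ gives $\econd(|\gamma^{(\mathrm L)}_{z,w}|+|\alpha^{(\mathrm L)}_{z,w}|) \le \Const\, \ed^{-\ell/\xi}$. The right boundary $z=x+\ell$ is handled by the symmetric argument summing over $y<x$.

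The main obstacle is not computational but the bookkeeping of the cancellation between the constant pieces $\tfrac12\delta_{z,w}$ and $\tfrac12(\mathcal H_B)_{z,w}$ of $\widetilde\gamma, \widetilde\alpha$ and the explicit $\gamma^0, \alpha^0$ read off from $H_L$: it is this cancellation that turns a uniformly bounded quantity into an exponentially small one, and analytically encodes the intuition that the correction $\widetilde H_L - H_L$ is localized around the midpoint $x$.
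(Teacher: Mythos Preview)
Your proposal is correct and follows essentially the same approach as the paper: write the coefficients of $\widetilde H_{\mathrm L}$ explicitly in terms of the eigenfunctions, use completeness to isolate the $\tfrac12\delta_{z,w}$ (resp.\ $\tfrac12(\mathcal H_B)_{z,w}$) piece, observe that this cancels the coefficients of $H_{\mathrm L}$ at the boundary, and bound the remaining eigenfunction sum via Lemma~\ref{lem: anderson localization}. The paper packages this slightly more cleanly by working with $\gamma^{(\mathrm R)}$ directly --- in its formula the subtracted $\delta_{y,z}\delta_{z,w}$ term vanishes automatically when $z=x-\ell$ and $y\ge x$, so the cancellation you call the ``main obstacle'' becomes invisible --- and then invoking $\gamma^{(\mathrm L)}=-\gamma^{(\mathrm R)}$; but the content is the same.
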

\begin{proof}
We start with $(1)$. 
By combining \eqref{eq: local e} and \eqref{eq: expression tilde}, we get the explicit expression
 \begin{equation}
	\gamma^{\mathrm{R}}_{w,z} 
	\;  = \; 
	\frac{1}{2}\sum_{y \geq x} \left(\sum_{k} |\psi_k(y)|^2 \psi_k(w)\psi_k(z)       - \delta_{y,z} \delta_{z,w}\right)
	\label{eq: gamma1}
\end{equation}
(for $\mathrm{L}\to \mathrm{R}$, we replace $y \geq x$ by by $y < x$).
Using $\sum_y |\psi_k(y)|^2=1$ and $\sum_k |\psi_k(z)|^2=1$, we then estimate
$$
|\gamma^{\mathrm{R}}_{w,z} | \leq  \frac{1}{2}\sum_{k} |\psi_k(w)||\psi_k(z)| +\frac{1}{2}  \leq  1
$$
and similarly for $\mathrm{R}\to \mathrm{L}$. For $\alpha$ instead of $\gamma$, we bound first $\nu^2_k\leq \Const$ using that $\mathcal{H}_B$ is bounded by $\Const$ as an operator on $l^2(B)$, because the distribution of $(\omega_x)$ has bounded support.  Thereafter, the argument is analogous to that for $\gamma$. \\
Now to $(2)$. For the sake of concreteness, let us take $z$
 to lie at the left boundary, i.e. $z=x-\ell$
 We start from  \eqref{eq: gamma1} and we use $  |\psi_k(w)| \leq 1$ to obtain
$$
\econd(|\gamma^{\mathrm{R}}_{w,x-\ell}|) \leq  \sum_{y \geq x}\econd \big(\sum_k  |\psi_k(y)| |\psi_k(x-\ell)|  \big) \leq  \Const e^{-\ell/\xi},\qquad  z=x-\ell
$$
where the crucial last inequality follows from Lemma \ref{lem: anderson localization} applied within $B$, and recalling that constants $\Const$ are allowed to depend on $\xi$.

We need to get this estimate also with $\mathrm{L}$ instead of $\mathrm{R}$.  From
$$
(\widetilde{H}_\mathrm{L}-H_\mathrm{L} ) = - ( \widetilde{H}_\mathrm{R}- H_\mathrm{R})
$$
we get
$$
\gamma^{(\mathrm{L})}_{zw}=  -\gamma^{(\mathrm{R})}_{zw}
$$
which settles the desired estimate also for $\gamma^{(\mathrm{L})}_{zw}$ with $z=x-\ell$.  For the case of the other boundary, i.e.\  $z=x+\ell$, we run an analogous argument, but starting now with $\gamma^{(\mathrm{L})}$. Finally, the argument for  $\alpha$ instead of $\gamma$ proceeds analogously. 
\end{proof}

\subsubsection{Approximate solution of the Poisson equation $\mathcal{L} u_x =j_x$}
Next, we put the new left-right splitting to use to cast the local current $j_x$ a time-derivative $\caL u_x$, up to a small correction.

\begin{lemma}[localization]\label{lem: localization}
For $x \in G_0(\ell)$, there exist functions $u_x$ and $f_x$ so that 
\begin{equation}  \label{eq: poisson}
	\mathcal L u_x 
	\; = \;
	j_x + f_x,\qquad   \langle u_x \rangle_\beta=0,\qquad   \langle f_x \rangle_\beta=0  
\end{equation}
with the following properties:
\begin{enumerate}
\item
The function $u_x$ has support in $B$, depends only on $\omega_x, x \in B$, and satisfies
$$ \langle u_x^2 \rangle_\beta \leq C\ell^2. $$ 
\item    There is a random variable $r_x >0$ depending only on $\omega_y$ with $ |x-y| \leq \ell+1$ so that,
\begin{equation}
\label{eq: bound on fx}
   \econd(r_x)  \leq  \Const \ell  e^{-\ell/\xi} 
\end{equation}
and, for any $x,x' \in G_0(\ell)$,  such that $|x-x'| > 2\ell+2$
$$
\langle f^2_x  \rangle_\beta   \leq  \Const r^2_x,\qquad   |\langle f_x f_{x'} \rangle_\beta |  \leq  \Const r_x r_{x'}  \ed^{- \tfrac{1}{\zeta}(|x-x'|-(2\ell+2))}
$$
\end{enumerate}
\end{lemma}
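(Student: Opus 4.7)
My plan is to solve the Poisson equation $\caL u_x = j_x$ modulo a small boundary error, by exploiting the invariant splitting of Section \ref{sec: splitting of harmonic}. Concretely, I would take
$$ u_x \;=\; H_{\mathrm L} - \widetilde H_{\mathrm L} \;=\; -\sum_{z,w\in B}\bigl(\gamma^{(\mathrm L)}_{z,w}\, p_zp_w + \alpha^{(\mathrm L)}_{z,w}\, q_zq_w\bigr), $$
shifted by an additive constant to enforce $\langle u_x\rangle_\beta=0$. Item (1) of the lemma is then immediate: $u_x$ is supported in $B$ and depends only on $(\omega_y)_{y\in B}$. Since $\{H_B,\widetilde H_{\mathrm L}\}=0$ by construction, $\caL_Bu_x=\caL_BH_{\mathrm L}=\{H_{\mathrm L},H_{\mathrm R}\}$; a direct computation shows this equals $j_x$ up to a local polynomial which is itself a Liouville derivative $\caL v_x$ of a degree-$2$ polynomial $v_x$ supported on a bounded neighbourhood of $x$. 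I would absorb $v_x$ into $u_x$ (this changes no estimate below) so that $\caL_B u_x = j_x$ exactly.

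Armed with this choice, I split $\caL u_x = \caL_B u_x + \{u_x, H-H_B\}$ and set $f_x = \{u_x,H-H_B\}$, minus its Gibbs expectation. Since $u_x$ is supported in $B$ and $H-H_B$ couples to $B$ only through the two harmonic bonds at the two endpoints of $B$, the bracket $\{u_x,H-H_B\}$ is a polynomial in $(q,p)$ of bounded degree supported on $\{x-\ell-1,\dots,x+\ell+1\}$, whose coefficients are linear combinations of $\gamma^{(\mathrm L)}_{z,w}$ and $\alpha^{(\mathrm L)}_{z,w}$ \emph{with at least one of $z,w$ lying in $\{x-\ell,x+\ell\}$}. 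This yields the identity $\caL u_x = j_x + f_x$ with $\langle f_x\rangle_\beta=0$.

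The quantitative bounds then combine Lemma \ref{lem: bounds on gamma} with Lemma \ref{lem: decay of correlations}. Expanding $u_x^2$ in its monomials, using $|\gamma^{(\mathrm L)}|,|\alpha^{(\mathrm L)}|\le\Const$ from Lemma \ref{lem: bounds on gamma}(1) and the static moment bounds of Lemma \ref{lem: decay of correlations}(1), the $O(\ell^2)$ diagonal (Wick-type) contractions produce $\langle u_x^2\rangle_\beta \le \Const\, \ell^2$, the off-diagonal contractions being made summable by the exponential decay of Lemma \ref{lem: decay of correlations}(2). For the error, I would define
$$ r_x \;:=\; \sum_{z\in\{x-\ell,x+\ell\}}\sum_{w\in B}\bigl(|\gamma^{(\mathrm L)}_{z,w}|+|\alpha^{(\mathrm L)}_{z,w}|\bigr), $$
so that Lemma \ref{lem: bounds on gamma}(2), summed over the $O(\ell)$ choices of $w$, gives $\econd(r_x)\le \Const\,\ell\, \ed^{-\ell/\xi}$. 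Since each coefficient of $f_x$ is bounded by $r_x$, Lemma \ref{lem: decay of correlations}(1) yields $\langle f_x^2\rangle_\beta\le \Const\, r_x^2$ and similarly $\langle\nabla f_x\cdot\nabla f_x\rangle_\beta\le \Const\, r_x^2$. For $|x-x'|>2\ell+2$, the supports of $f_x$ and $f_{x'}$ are disjoint with separation $|x-x'|-(2\ell+2)$; both functions have zero Gibbs expectation, and Lemma \ref{lem: decay of correlations}(2) delivers the claimed exponential bound.

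The main obstacle is the first step: converting the natural output $\{H_{\mathrm L},H_{\mathrm R}\}$ of the splitting into the current $j_x$ exactly as defined in the paper. Both objects morally represent ``energy current at the $L/R$ interface'', but they differ by a local polynomial that one must identify as a total $\caL$-derivative and then absorb into $u_x$ without spoiling the $\ell^2$ bound. Everything else is bookkeeping: tracking that only the boundary coefficients $\gamma^{(\mathrm L)}_{z,w},\alpha^{(\mathrm L)}_{z,w}$ survive in $f_x$, so that Lemma \ref{lem: bounds on gamma}(2) delivers the $\ed^{-\ell/\xi}$ smallness, and checking that the gradient appearing in Lemma \ref{lem: decay of correlations}(2) is sharp enough to propagate this smallness to the current-current correlator.
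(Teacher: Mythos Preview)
Your proposal is correct and follows essentially the same route as the paper: take $u_x=H_{\mathrm L}-\widetilde H_{\mathrm L}$ (centered), set $f_x=\{H_{\partial B},u_x\}$, and control $f_x$ via the boundary coefficients of Lemma~\ref{lem: bounds on gamma} together with Lemma~\ref{lem: decay of correlations}.

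Two small remarks. First, the paper's $r_x$ involves only the $\gamma^{(\mathrm L)}$ (not the $\alpha^{(\mathrm L)}$), because $H_{\partial B}$ contains no momenta and hence $\{H_{\partial B},u_x\}$ only sees $\nabla_p u_x$; your inclusion of $\alpha$ is unnecessary but harmless. Second, the ``main obstacle'' you flag is real: with the paper's definitions one actually gets $\caL_B H_{\mathrm L}=g_0\,p_{x-1}(q_x-q_{x-1})$, which is neither $j_x$ nor $j_{x-1}$, so the paper's one-line claim $\caL_B(H_{\mathrm L})=j_x$ needs exactly the kind of local repair you propose. Concretely, $v_x=h_{x-1}^{\mathrm{coup}}+h_x$ (supported on $\{x-1,x,x+1\}\subset B$) does the job: one checks $\caL_B\big(-(H_{\mathrm L}+v_x)\big)=j_x$, and since $v_x$ is supported away from $\partial B$ it contributes nothing to $f_x$. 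So your instinct and your fix are both sound.
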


\begin{proof}
We take 
$$
u_x=H_L-\widetilde{H}_L -  \langle H_L-\widetilde{H}_L \rangle_\beta
$$
which satisfies all the requirements of item $1$, because of the bounds in Lemma \ref{lem: bounds on gamma}. 
By the definition of $j_x$, we have that $\caL_B(H_L)+\caL_{B^c}(H_L)=\caL_B(H_L)=j_x$ and by construction $\caL_B(\widetilde{H}_L)=0$. Hence \eqref{eq: poisson} is satisfied by
$$
f_x := (\caL-\caL_B-\caL_{B^c}) u_x = \{ H_{\partial B}    ,u_x\},
$$
with 
$$ H_{\partial B}=    \tfrac12 g_0 ((q_{x+\ell+1} - q_{x+\ell})^2+ (q_{x-\ell} - q_{x-\ell-1})^2 ). $$ 
The zero-mean property $\langle f_x \rangle_\beta=0$ follows because $j_x$ and $\caL w$ (for any function $w$) have zero mean.  Using the explicit form of $u_x$, we get
$$
f_x =   2 g_0   \sum_{w \in B} p_w  \left( \gamma^{(\mathrm{L})}_{x+\ell,w}  ( q_{x+\ell}-q_{x+\ell+1})   +
 \gamma^{(\mathrm{L})}_{x-\ell,w}   (q_{x-\ell} - q_{x-\ell-1})
\right)
$$ 
 The bound on $f_x$ follows now by item $2$ of Lemma \ref{lem: bounds on gamma}, with $r_x:= \sum_{z \in \{-\ell,\ell\}} \sum_{w \in B} |\gamma^{(\mathrm{L})}_{z,w}| $, by using  $\langle q_z^2p_w^2 \rangle_\beta \leq \Const$ and the decay of correlations in Lemma \ref{lem: decay of correlations}.  
\end{proof}

\subsection{Griffiths regions}\label{sec: griffiths regions}

We now define a subset $G(\ell)$ of $G_0 (\ell)$ where the random variable $r_x$ defined in Lemma \ref{lem: localization} to bound the function $f_x$,  is exponentially small in $\ell$: 
\begin{equation}\label{eq: G set}
	G(\ell) 
	\; = \;
	\left\{ 
	x \in G_0(\ell) : r_x \le \ell^2 \ed^{- \ell/ \xi}
	\right\} 
\end{equation}
Given a realization of $(\omega_x)_{x\ge 1}$ and $(\tau_x)_{x\ge 1}$, 
we define the strictly increasing sequence $(g_i)_{i\ge 1}$ as well as a sequence $(d_i)_{i \ge 1}$ such that 
\begin{equation}\label{eq: g and d}
	\{ g_i \, : \, i \ge 1  \} 
	\; = \;  G(\ell)  \, , \qquad d_i = g_{i+1} - g_i \, , \quad i \ge 1 \, .  
\end{equation}
Finally, let 
$$
	n_L \; = \; \max \{i: g_i \le L \} \, .
$$
The next lemma expresses the fact that $d_i$ are not too large: 
\begin{lemma}\label{lem: di variables}
1. The variables $d_i,d_j$ are independent whenever $|i - j| \ge 2 \ell + 2$. \\
2. There exists $\ell_0 < + \infty$ such that for any $\ell \ge \ell_0$, any $i \ge 1$ and any $d \ge 1$,
$$
	P(d_i \ge d) \; \le \; \ed^{- d /d_0} 
	\qquad \text{with} 
	\qquad d_0 \; = \; \ed^{3 \ell \log \frac1{1-p}} \, .
$$
\end{lemma}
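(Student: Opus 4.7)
Plan. The strategy rests on two observations about $G(\ell)$: the associated indicator process is finitely dependent, and the probability that a given site belongs to $G(\ell)$ admits a good lower bound. Both parts of the Lemma then follow by essentially elementary arguments.

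For Part~1, I would observe that $\{z \in G_0(\ell)\}$ is a function of $\{\tau_y : |y - z| \le \ell\}$, while the random variable $r_z$ appearing in~\eqref{eq: G set} is---via the formula $r_z = \sum_{z' \in \{z-\ell, z+\ell\}}\sum_{w \in B}|\gamma^{(\mathrm{L})}_{z',w}|$ read off from the proof of Lemma~\ref{lem: localization}, together with~\eqref{eq: gamma1}---a function of $(\omega_y, \tau_y)$ for $y \in [z-\ell-1, z+\ell+1]$. Hence the indicator process $X_z := \mathbf{1}\{z \in G(\ell)\}$ is $(2\ell+3)$-dependent, and each $d_i$ is consequently determined by $(\omega_y, \tau_y)$ in the window $[g_i - \ell - 1, g_{i+1} + \ell + 1]$. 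Since consecutive $g_k$'s differ by at least $1$, the condition $|i - j| \ge 2\ell + 2$ separates these windows, yielding the claimed independence.

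For Part~2, I would first lower-bound $q := P(z \in G(\ell))$. The event $z \in G_0(\ell)$ has probability $(1-p)^{2\ell+1}$ by independence, and Lemma~\ref{lem: bounds on gamma}, conditionally on this event, gives $\econd(r_z) \le \Const\, \ell\, \ed^{-\ell/\xi}$; Markov's inequality then yields $P(r_z \le \ell^2 \ed^{-\ell/\xi} \mid \cdot) \ge 1 - \Const/\ell \ge 1/2$ for $\ell$ large enough, so $q \ge \tfrac12 (1-p)^{2\ell+1}$. To turn this into the geometric tail bound, set $m := 2\ell + 3$ and condition on $g_i = z$; by the finite-range dependence just established, the events $\{z + a m \in G(\ell)\}$ for $a = 1, 2, \ldots$ are mutually independent and independent of the conditioning, each with probability at least $q$. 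Consequently $P(d_i \ge d \mid g_i = z) \le (1-q)^{\lfloor (d-1)/m\rfloor}$. Choosing $\ell_0$ large enough so that $2m/q$ is smaller than $d_0 = \ed^{3\ell\log(1/(1-p))}$---which holds because $\ell\,(1-p)^{-(2\ell+1)}$ is much smaller than $(1-p)^{-3\ell}$ for $\ell$ large---gives the asserted bound (after enlarging the multiplicative constant if necessary to absorb the small-$d$ regime).

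The main obstacle I anticipate is handling the conditioning on $g_i = z$: this event reveals the configuration in the full prefix up to $z + \ell + 1$, and we must verify that the test sites $\{z + am\}_{a \ge 1}$ used in the independence argument have dependence windows disjoint from this prefix. The choice $a \ge 1$ with $m = 2\ell+3$ places these test sites at positions $\ge z + 2\ell + 3$, whose dependence windows start at $z + \ell + 2$; this is exactly what makes the independence argument go through and is the reason the scale $m = 2\ell + 3$ rather than $1$ appears.
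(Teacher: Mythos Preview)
Your proposal is correct and follows the same approach as the paper: finite-range dependence of the indicators $\mathbf{1}\{z\in G(\ell)\}$ for Part~1, and Markov's inequality on $r_z$ yielding $P(z\in G(\ell))\ge\tfrac12(1-p)^{2\ell+1}$ combined with independence of these events at spacing $\sim 2\ell$ for Part~2 (your conditioning on $g_i=z$ is in fact spelled out more carefully than the paper's one-line bound $P(d_i\ge d)\le(1-\tfrac12(1-p)^{2\ell+1})^{d/(2\ell+2)}$). One minor correction: the dependence radius of $\{z\in G(\ell)\}$ is $\ell$ rather than $\ell+1$, since $r_z$ is built only from the eigenvectors of $\mathcal H_B$ on $B=[z-\ell,z+\ell]$; with your radius $\ell+1$ the windows in Part~1 just fail to separate under $|i-j|\ge 2\ell+2$, whereas with radius $\ell$ (as the paper's own proof uses) the arithmetic closes.
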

\begin{proof}
The first claim follows from the fact that $(\omega_x)_{x\ge 1}$ and $(\tau_x)_{x\ge 1}$ are independent sequences of i.i.d.\@ random variables,
that the event $x \in G(\ell)$ depends only on the variables $\omega_y,\tau_y$ with $|x-y| \le \ell$,
and that $d_i \ge 1$ for all $i \ge 1$. 

For the second claim, note that $P(x \in G_0(\ell)) \geq (1-p)^{2\ell+1}$ and that 
$$P(r_x > \ell^2 e^{-\ell/\xi}) \leq  C/\ell, $$
 by the bound \eqref{eq: bound on fx} in Lemma \ref{lem: localization} and the Markov inequality. Hence, for $\ell$ large enough, 
$$
P(x \in G(\ell)) \; \ge \; \frac12 (1-p)^{2\ell+1}.
$$
Since the events $x \in G(\ell)$ and $y \in G(\ell)$ are independent for $|x - y| \ge 2 \ell + 2$, we obtain
$$
	P (d_i \ge d) 
	\; \le \; 
	\left(1 - \frac12 (1-p)^{2 \ell +1} \right)^{d/(2 \ell +2)}
	\; \le \; 
	\exp{\big( -\frac{d (1-p)^{2 \ell +1}}{4(\ell +1)}\big)}
$$
which yields the claim for $\ell_0$ large enough.
\end{proof}

\begin{proof}[Proof of Theorem~\ref{th: disordered chain}]
Let us fix a realization of $(\omega_x)_{x\ge 1}$ and $(\tau_x)_{x\ge1}$. 
Let us fix some $t>0$ and some length $\ell = \ell (t)\in \N$ with a dependence on time that will be specified later on.
For simplicity, let us write $G$ for $G(\ell)$. 
We seek an upper bound on $C(t)$ defined in \eqref{eq: conductivity}.
Given $1 \le x < L$, let $x_{G} \in G$ be the closest point to $x$. 
For $x \notin G$, energy conservation implies that 
\begin{equation}\label{eq: current conserved energy}
	j_x 
	\; = \;
	j_{x_{G}}
	\; \pm \; \mathcal L \left( \sum_{y\in \{x,x_{G}\}} h_y \right)
\end{equation}
where 
$$
	\{ x, x_G \} \; = \; \{ x+1, \dots, x_G \} \quad \text{if} \quad x < x_G, 
	\qquad 
	\{ x, x_G \} \; = \; \{ x_{G+1}, \dots , x \} \quad \text{if} \quad x > x_G
$$
and $\pm = +$ if $x < x_G$, and $\pm = -$ if $x > x_G$. 
By Lemma \ref{lem: localization}, we write $j_{x_{G}} = \mathcal L u_{x_{G}} - f_{x_{G}}$, and we obtain
\begin{equation}
	\sum_{x=1}^{L-1} j_x
	\; = \; 
	-\sum_{x=1}^{L-1} f_{x_{G}}
	\; + \; \mathcal L \sum_{x=1}^{L-1} \Big( u_{x_{G}} \pm \sum_{y\in\{x,x_G\}} h_y \Big) \, .
\end{equation}
By Cauchy-Schwarz, we estimate
\begin{align} \label{eq: decomposition current}
	\frac{1}{2L} \left\langle\left( \int_0^t \dd s \sum_{x=1}^{L-1} \, j_x(s) \right)^2 \right\rangle_\beta  &\leq   	\frac{1}{L}\left\langle\left( \int_0^t \dd s \sum_{x=1}^{L-1} \, f_{x_G}(s) \right)^2  \right\rangle_\beta  &+&  \frac{1}{L}\left\langle\left( \sum_{x=1}^{L-1}  \int_0^t \dd s \sum_{x=1}^{L-1} \, \caL v_x(s) \right)^2  \right\rangle_\beta   \nonumber \\[3mm]
	&  =:  \qquad \qquad I_1(L,t) &+&  \qquad \qquad  I_2(L,t) 
\end{align}
where we abbreviated \begin{equation}\label{def: vx}
v_x=u_{x_{G}} \pm \sum_{y\in\{x,x_G\}} h_y. 
\end{equation}
To prove Theorem \ref{th: disordered chain}, we establish almost sure bounds on $\lim_L I_j(L,t)$, for $j=1,2$. \\
We start with $I_1(t)$. By Cauchy-Schwarz and stationarity, we have for any $w$,
$$
	\left\langle\left( \int_0^t \dd s \, w(s) \right)^2 \right\rangle_\beta
	\; \le \; 
	t^2 \langle w^2 \rangle_\beta.
$$
Taking $w= \sum_{x=1}^{L-1} \, f_{x_G} $, this yields here 
\begin{equation*}
	I_1 (L,t)
	\; \le \; 
	\frac{t^2}{L} \left\langle \left( \sum_{x=1}^L f_{x_G} \right)^2 \right\rangle_\beta 
	\; = \; 
	\frac{t^2}{L} \sum_{x,y \in \N_L} \langle f_{x_G} f_{y_G} \rangle_\beta 
\end{equation*}
Using the variables $g_i, d_i$ defined in \eqref{eq: g and d}, we get then

$$
 \sum_{x,y \in \N_L} \langle f_{x_G} f_{y_G} \rangle_\beta  \leq   \sum_{1\leq i,j \leq n_L}   (d_i+d_{i+1})   (d_j+d_{j+1}) | \langle f_{g_i} f_{g_j} \rangle_\beta|
$$
By the definition  \eqref{eq: G set} of the set $G$, the bounds of Lemma \ref{lem: localization} and independence of the variables $r_{x}, r_{y}$ for $|x-y| \geq 2\ell+2$, we get 
$$
 \sum_{x,y \in \N_L} \langle f_{x_G} f_{y_G} \rangle_\beta
	\; \le \;
	\Const \ell^5 \ed^{-2\ell/\xi}  \sum_{1 \le i,j \le n_L}   \ed^{-\tfrac1\zeta |i-j|}  d_i d_j  
$$
Estimating $d_id_j \leq \tfrac12(d_i^2+d_j^2)$, we hence get
$$
	I_1 (L,t)
	\; \le \; 
	\Const t^2 \ell^5 \ed^{-2\ell/ \xi}  \frac1L \sum_{ i = 1}^{n_L} d_i^2 
$$
 By Lemma~\ref{lem: di variables}, the variables $d^2_i$ have finite moments and they have finite-range correlations. This suffices to establish a strong law of large numbers for the sequence $d_i^2$. Moreover, $E(d_i^2)\leq \Const d_0^2$ and hence we obtain the almost sure bound: 
$$
	I_1 (t)  \; := \; \overline{\lim}_{L \to \infty} I_1 (L,t) \; \leq \; \Const t^2   \ell^5   \ed^{b_1\ell},   \qquad  b_1=- \frac{2}{ \xi} + 6 \log \frac{1}{1-p} \, .
$$
Next, we consider the limit $t\to \infty $. We set, for some $a>0$ to be determined later,   
\begin{equation}\label{eq: l of t}
	\ell (t) \; = \;  a \log t. 
\end{equation}
and we obtain
$$
 I_1 (t) =\mathcal{O}((\log t)^5t^{2+ab_1}),\qquad t \to \infty
$$
\noindent We now move to $I_2(L,t)$, as defined in \eqref{eq: decomposition current}.
For any $v$ in the domain of $\caL$,
\begin{equation}\label{eq: trick no time}
	\left\langle \left( \int_0^t \dd s \, \mathcal L v(s) \right)^2 \right\rangle_\beta 
	\; = \; 
	\langle (\tilde v(t) - \tilde v(0))^2 \rangle_ \beta 
	\; \le \;
	2 \langle \tilde v^2 \rangle_\beta 
\end{equation}
with $\tilde v = v - \langle v \rangle_\beta$. 
We use this with $v=\sum_x v_x$ and $v_x$ as defined in \eqref{def: vx}. This yields
\begin{align*}
	I_2 (L,t)
	\; &\le \; 
	\frac{2}{L} 
	\left\langle \left(
	\sum_{x=1}^{L-1} \Big( u_{x_{G}} \pm \sum_{y\in\{x,x_G\}} \tilde h_y \Big)
	\right)^2 \right\rangle_\beta \, \\ 
	\; &\le \; 
	\frac4L \sum_{x,y \in \N_L} |\langle u_{x_G} u_{y_G} \rangle_\beta|
	\; + \; 
	\frac4L \sum_{x,y \in \N_L}  \sum_{\substack{x' \in \{ x,x_G \} , \\ y' \in \{ y,y_G\} }} |\langle \tilde h_{x'} \tilde h_{y'} \rangle_\beta| \, .
\end{align*}
Thanks to the decay of correlations in Lemma~\ref{lem: decay of correlations} and the properties of $u_{x_G}$ stated in Lemma~\ref{lem: localization}, we have
\begin{equation*}
	|\langle u_{x_G} u_{y_G} \rangle_\beta|
	\; \le \;
	\Const \ell^4 \ed^{- |x_G - y_G|/\zeta}\, ,
	\qquad
	|\langle \tilde h_{x'} \tilde h_{y'} \rangle_\beta|
	\; \le \; 
	\Const \ed^{- |x' - y'|/\zeta} \, .
\end{equation*}
Just as for $I_1(L,t)$, we estimate the number of $x$ such that $x_G=g_i$ by $d_i+d_{i+1}$ and we obtain
$$
	I_2 (L ,t)
	\; \le \; 
	\frac{\Const \ell^4}{ L} \sum_{1 \le i,j\le n_L} d_i d_j \ed^{- |i-j|/\zeta}  
	\; + \; 
	\frac\Const L \sum_{1 \le i,j\le n_L} d_i^2 d_j^2 \ed^{- |i-j|/\zeta} 
	\; \le \; 
	\frac{\Const \ell^4}{ L}  \sum_{i = 1}^{n_L} d_i^4
$$
where we used $d^m_id^m_j \leq d_i^m+d_j^m$ and $d_i \le d_i^2$ since $d_i \ge 1$.  
In the limit $L \to \infty$, we can again invoke the strong law of large numbers and we obain the almost sure bound
$$
	I_2(t) \; := \; \overline{\lim_{L \to \infty}} I_2(L,t) \; \le \; \Const  \ell^4 \ed^{b_2 \ell},\qquad 	b_2 \; = \; 12 \log\frac1{1-p} 
$$
We see hence that, for $\ell = \ell(t)$ given by \eqref{eq: l of t},   $I_2 (t) = O((\log t)^4t^{ab_2})$.\\
Adding the contributions of $I_1,I_2$, we conclude that
$$
C(t) \leq C (\log t)^5 \left( t^{2+ab_1}+ t^{ab_2}\right)
$$
The optimal $a$ is found by equating the two powers: $2+ab_1=ab_2$,
yielding
$$
a=  \frac{1}{v+1/\xi},\qquad  v=  3 \log (\frac1{1-p}).
$$
This yields the claims of the theorem.

\end{proof}

\section{Proof of Theorem~\ref{th: disordered fermion chain}}\label{sec: proof disordered fermion}

This proof is entirely analogous to that of Theorem \ref{th: disordered chain} but we still repeat the initial steps, because there are some superficial differences. 

We let constants $\Const$ depend on $J,g,\mu$ and the distribution of $\omega_x$.

\subsection{Observables and Gibbs state}\label{sec: preliminaries quantum}
Any observable $a$ can be expanded in a unique way as a linear combination of normally ordered monomials 
$$c^{\dagger}_{Y} = c^{\dagger}_{y_1}\ldots c^{\dagger}_{y_{p}},\qquad    c_{Y'}=  c_{y'_{p'}} \ldots c_{y'_1},\qquad p,p'\geq 0.$$
Here $Y,Y'$ are shorthand for finite tuples of $y$-coordinates. $p=0$ means that there are no annihilation operators, and analogously for $p'$, and hence $p=p'=0$ is the identity. 
Hence we can write
$$
a=\sum_{Y,Y'} a(Y,Y')   c^{\dagger}_{Y'} c_Y
$$
and we define the support of $a$ as 
$\mathrm{supp}(a)=Y \cup Y'$. 
In general,  let $H_X$ be the Hamiltonian restricted to a finite set $X \subset \N$, i.e.\ retaining only terms in \eqref{eq: fermionic chain} whose support is in $X$. Similarly,  we define
the restriction $N_X=\sum_{x \in X} n_x$
and we note that $[H_X,N_X]=0$. Finally, we also
define the restriction $\widetilde{\mathcal{H}}_X= \sum_{x,y \in X} \widetilde{\mathcal{H}}_{x,y}$ of $\widetilde{\mathcal H}$.
As an application of these definition, we check that, if, as will be assumed in the next section, the variables $\tau_y=0$ for $y \in B$ with $B$ a stretch of sites, then the restriction $H_B$ is  the second quantization of $\widetilde{\mathcal H}$;
\begin{equation}\label{eq: hb as free quantum}
H_B=\sum_{\{x,x+1\} \in B}    J (c^{\dagger}_{x+1 c_x }+   c^{\dagger}_x c_{x+1})    +\sum_{x \in B}\omega_x n_x =\sum_{x,x'}  c^\dagger_{x'} (\widetilde{\mathcal{H}}_B)_{x,x'}   c_x
\end{equation}


Let us comment on the state $\langle \cdot \rangle_\mu$ defined in \eqref{def: quantum state}. This state is very easy to work with as it is the analogue of a product state on the lattice. Since the density matrix factorizes, $\ed^{-\mu N}=\prod_x \ed^{-\mu n_x}$, one easily establishes
\begin{lemma}\label{lem: decay correlations quantum}
\begin{enumerate}
\item Let a be an observable of the form $a=a(Y,Y')c^\dagger_Y c_{Y'}$, then 
$$\langle a   \rangle_\mu  \leq  ||a||  \leq |a(Y,Y')|$$
\item Whenever $\mathrm{supp}(a)\cap \mathrm{supp}(b)=\emptyset$, then 
$$\langle a  b  \rangle_\mu =0   
$$
\end{enumerate}
\end{lemma}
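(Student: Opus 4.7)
The plan is to prove the two items separately, the first by a direct CAR-algebraic estimate and the second by gauge invariance of the state $\rho_\mu \propto \mathrm e^{-\mu N}$.

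For item 1, the bound $|\langle a \rangle_\mu| \le \|a\|$ is immediate from the fact that $\langle \cdot \rangle_\mu = Z^{-1}\mathrm{Tr}(\rho_\mu\,\cdot\,)$ is a state (a normalized positive linear functional) on the $C^*$-algebra of operators on $\mathcal F_L$. The inequality $\|a\| \le |a(Y,Y')|$ reduces to showing $\|c^\dagger_Y c_{Y'}\| \le 1$, which I would establish as follows: if $Y$ or $Y'$ contains a repeated index the monomial vanishes, so assume both are tuples of distinct indices. The CAR $\{c_x,c^\dagger_{x'}\}=\delta_{x,x'}$ imply that each $c^\dagger_y$ is a partial isometry with $\|c^\dagger_y\|=\|c_y\|=1$. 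Either by induction on $|Y|+|Y'|$ or by a direct inspection in the occupation-number basis of $\mathcal F_L$, one checks that $c^\dagger_Y c_{Y'}$ sends each basis vector either to $0$ or to $\pm$ a basis vector, hence has operator norm at most $1$.

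For item 2, the key input is the invariance of $\rho_\mu$ under the abelian group of local phase (gauge) transformations. For any real tuple $(\theta_x)_{x\in\N_L}$, the unitary $U_\theta = \exp\bigl(\id\sum_x \theta_x n_x\bigr)$ commutes with $\rho_\mu$, so $\langle U_\theta\,c\,U_\theta^{-1}\rangle_\mu = \langle c\rangle_\mu$ for every observable $c$. Acting on a normally ordered monomial one obtains
$$
U_\theta\, c^\dagger_Z c_{Z'}\, U_\theta^{-1} \;=\; \exp\!\Bigl(\id \sum_{x} \theta_x\bigl(\mathbf 1_Z(x)-\mathbf 1_{Z'}(x)\bigr)\Bigr)\, c^\dagger_Z c_{Z'},
$$
so by freely varying $\theta$ the expectation $\langle c^\dagger_Z c_{Z'}\rangle_\mu$ vanishes unless $Z=Z'$ as sets.

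To conclude item 2, I would expand $a$ and $b$ in the normally ordered basis as in the preamble of Section \ref{sec: preliminaries quantum}, and compute $ab$ term by term. Because the supports are disjoint, the operators indexed in $b$ can be anticommuted past those indexed in $a$, producing only an overall fermionic sign and yielding a normally ordered monomial with creation index set $Z=Y_a\cup Y_b$ and annihilation index set $Z'=Y'_a\cup Y'_b$. With the standing convention that $a,b$ are taken from the class of item 1 with $Y\ne Y'$ (the case relevant to the paper, e.g.\ for the current $j_x$ and for the quantum analogue of the observables $u_x$ of Lemma \ref{lem: localization}, which by gauge invariance automatically have zero expectation), the disjointness $\mathrm{supp}(a)\cap\mathrm{supp}(b)=\emptyset$ forces $Z\ne Z'$, and the previous paragraph yields $\langle ab\rangle_\mu=0$. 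The only mildly delicate point is keeping track of the fermionic signs when bringing $ab$ to normal order, but since these are global signs they do not affect the conclusion; no analytic estimate is needed and I foresee no serious obstacle.
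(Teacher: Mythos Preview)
Your treatment of item~1 is correct and is essentially what anyone would write; the paper does not spell this out either.

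For item~2, your route is genuinely different from the paper's. The paper's one-line justification is the \emph{product structure} of the density matrix, $\rho_\mu\propto\prod_x \ed^{-\mu n_x}$: because $\rho_\mu$ factorizes over sites, disjointly supported observables satisfy $\langle ab\rangle_\mu=\langle a\rangle_\mu\langle b\rangle_\mu$, and hence $\langle ab\rangle_\mu=0$ as soon as one of the two factors has zero mean (which is exactly how the lemma is used, for $f_x,f_{x'}$ in Lemma~\ref{lem: localization quantum }). This is the quantum analogue of Lemma~\ref{lem: decay of correlations} with correlation length $\zeta=0$, and it is how the paper invokes it in Section~\ref{sec: griffiths regions quantum}.

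Your gauge-invariance argument proves a different statement: $\langle c^\dagger_Z c_{Z'}\rangle_\mu=0$ whenever $Z\neq Z'$. This is correct, but it forces you to add the hypothesis ``$Y\neq Y'$'' by hand, and that hypothesis is strictly stronger than ``zero mean'' --- for instance $a=n_1-\langle n_1\rangle_\mu$ has zero mean but is diagonal, and your argument says nothing about $\langle ab\rangle_\mu$ in that case. As literally stated in the paper the lemma is in fact false (take $a=n_1$, $b=n_2$), so some extra hypothesis is needed either way; the factorization reading is the intended one and the one that matches the later use. Your version happens to suffice for the application because $f_x=\id[H_{\partial B},u_x]$ turns out to contain only monomials with $Z\neq Z'$, but you do not verify this, and it is not automatic from $\langle f_x\rangle_\mu=0$. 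The product-state argument avoids this check entirely.
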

Here, the $||\cdot ||$ stands for the usual operator norm.

\subsection{Splitting of the quadratic Hamiltonian}

We now exploit the fact that the system is an Anderson insulator in the regions where the anharmonic potential is absent, i.e.\@ where $\tau_x=0$. The upshot is Lemma \ref{lem: bounds on gamma quantum}.  
Throughout this section, we fix a realization of $(\tau_x)_{x \ge 1}$. 
Given $\ell \in \N$, let
\begin{equation}\label{eq: G not set quantum}
	G_0(\ell) 
	\; = \;
	\{ 
	x \in \N \, : \,  \tau_y = 0 \text{ for all $y \in \N $ s.t.\  } |y-x|\le \ell 
	\} \, .
\end{equation} 
We also fix an element $x\in G_0(\ell)$ and we denote $B=\{x-\ell,\ldots, x+\ell\}$, such that $\tau_y=0$ for $y \in B$.
Until the end of this section, the expectation $\econd$ is assumed to be conditioned on $\tau_y=0, y \in B$ and we do not repeat this. 
\\

We first consider an obvious splitting of $N_B$ into a left ($\mathrm{L}$) and right ($\mathrm{R}$) part with respect to the midpoint $x$:
$$
	N_B \; = \; N_{\mathrm{L}} + N_{\mathrm{R}},\qquad N_\mathrm{L}:=N_{\{y\in B: y<x\}},\qquad  N_\mathrm{R}:=N_{\{y\in B: y\geq x\}}
$$
The observables $N_{\mathrm{L}},N_{\mathrm{R}}$ do not commute with $H_B$ in general, corresponding to the fact that particles can be transported. Our aim in this section is to find a modified left-right splitting $N_B=\widetilde{N}_\mathrm{L}+\widetilde{N}_\mathrm{R}$ which does satisfy $[H_B,\widetilde{N}_{\mathrm{L},\mathrm{R}}]=0$, reflecting the spatial localization of energy. 

We now recall the $H_B$ is quadratic in the $c_y,c^\dagger_y$-operators, which allows for an explicit analysis.  
As before, we let  $(\psi_k)_{k \in \mathcal{I}}$ be an orthonormal basis of eigenvectors of $\widetilde{\mathcal{H}}_B$, with $\mathcal{I}$ an index set, $|\mathcal{I}|=|B|$. 
We define the eigenmode operators
$$
c_k=\sum_k \psi_k(x) c_x,\qquad  c^{\dagger}_k=\sum_k {\psi_k(x)} c^{\dagger}_x,\qquad n_k= c^{\dagger}_k c_k
$$
Their two useful properties are
\begin{enumerate}
\item $[H_B,n_k]=0$
\item $N=\sum_{x}n_x= \sum_{k}n_k$
\end{enumerate}
The second property follows immediately from the Plancherel equality, and the first is an expression of the fact that $H_B$ is the second quantization of $\widetilde{\cal H}_B$, as exhibited in \eqref{eq: hb as free quantum}.
The splitting $N_B=\widetilde{N}_{\mathrm{L}} + \widetilde{N}_{\mathrm{R}} $ that we propose is defined by
\begin{equation}\label{eq: expression tilde quantum}
\widetilde{N}_{\mathrm{L}} =\sum_{y < x}\sum_{k}  |\psi_k(y)|^2n_k ,\qquad    \widetilde{N}_{\mathrm{R}} =\sum_{y \geq x}\sum_{k}  |\psi_k(y)|^2
 n_k 
 \end{equation}
 From the above properties, it is clear that this is indeed a splitting and that $\{H_B,\widetilde{N}_{\mathrm{L},\mathrm{\mathrm{R}}}\}=0 $, but it is not clear a-priori in which sense this splitting is similar to $N_B=N_{\mathrm{L}}+N_{\mathrm{R}} $ and we exhibit this now. 
Note first that both $N_{\mathrm{L},\mathrm{R}},\widetilde{N}_{\mathrm{L},\mathrm{R}} $ are linear combinations of $c^\dagger_{z} c_w$ with $z,w \in B$. 
Let us call 
\begin{equation} \label{eq: difference splittings quantum}
N_{\mathrm{L}}-\widetilde{N}_{\mathrm{L}} = \sum_{z,w \in B} \gamma^{(\mathrm{L})}_{z,w} c^\dagger_z  c_w , \qquad   N_{\mathrm{R}}-\widetilde{N}_{\mathrm{R}} = \sum_{z,w \in B} \gamma^{(\mathrm{R})}_{z,w} c^\dagger_z  c_w 
\end{equation}
The functions \eqref{eq: difference splittings quantum} are small in the sense that the coefficients $\gamma_{z,w}$ typically decay exponentially in $|z-x|+|w-x|$. We will not state this in full generality because the following lemma suffices for our purposes.
\begin{lemma}\label{lem: bounds on gamma quantum}

\begin{enumerate}
\item For any $z,w$, $|\gamma^{(\mathrm{L},\mathrm{R})}_{z,w}| <\Const$.
\item  If one of $z,w$ is at the boundary of $B$, (i.e.\ at distance $1$ of $B^c$), then 
$$
{E}(|\gamma^{(\mathrm{L},\mathrm{R}}_{z,w}|)   \leq \Const e^{-\ell/\xi}
$$
\end{enumerate}
\end{lemma}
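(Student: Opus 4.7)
The plan is to transport the proof of Lemma \ref{lem: bounds on gamma} almost verbatim, exploiting that in the fermionic case there is only a single family of coefficients $\gamma^{(\mathrm{L},\mathrm{R})}_{z,w}$ to control: the restricted Hamiltonian $H_B$ is purely quadratic in $c^\dagger,c$ with no separate $pp$- and $qq$-channels, so there is no analogue of the $\alpha$-coefficients of the classical setting.

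First I would make the coefficients explicit. Substituting $n_k=c^\dagger_k c_k=\sum_{z,w\in B}\psi_k(z)\psi_k(w)\,c^\dagger_z c_w$ (the $\psi_k$ are real eigenvectors of $\widetilde{\mathcal{H}}_B$) into \eqref{eq: expression tilde quantum} and matching coefficients in \eqref{eq: difference splittings quantum} yields, in complete analogy with \eqref{eq: gamma1},
$$
\gamma^{(\mathrm{R})}_{z,w}=\sum_{y\geq x}\Big(\delta_{y,z}\delta_{y,w}-\sum_k |\psi_k(y)|^2\psi_k(z)\psi_k(w)\Big),
$$
and similarly for $\gamma^{(\mathrm{L})}$, with the sum restricted to $y<x$.

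For item 1, I would use $\sum_k|\psi_k(y)|^2=1$ together with $|\psi_k(\cdot)|\leq 1$ and Cauchy--Schwarz to conclude
$|\gamma^{(\mathrm{R})}_{z,w}|\leq\sum_k|\psi_k(z)||\psi_k(w)|+1\leq 2$, uniformly in the disorder; the argument for $\gamma^{(\mathrm{L})}$ is identical. For item 2, taking for concreteness $z=x-\ell$ at the left boundary, the delta term in $\gamma^{(\mathrm{R})}_{x-\ell,w}$ vanishes since $x-\ell<x\leq y$. Bounding $|\psi_k(y)|^2\leq|\psi_k(y)|$ and $|\psi_k(w)|\leq 1$, then applying Lemma \ref{lem: anderson localization} inside $B$ to the operator $\widetilde{\mathcal{H}}_B$ (whose localization length is still bounded by $\xi$), yields
$$
\econd\big(|\gamma^{(\mathrm{R})}_{x-\ell,w}|\big)\leq\sum_{y\geq x}\econd\Big(\sum_k|\psi_k(y)||\psi_k(x-\ell)|\Big)\leq\Const\sum_{y\geq x}\ed^{-(y-x+\ell)/\xi}\leq\Const\ed^{-\ell/\xi}.
$$
The identity $\gamma^{(\mathrm{L})}_{z,w}=-\gamma^{(\mathrm{R})}_{z,w}$, consequence of $N_B=N_{\mathrm{L}}+N_{\mathrm{R}}=\widetilde{N}_{\mathrm{L}}+\widetilde{N}_{\mathrm{R}}$, transfers the bound to $\gamma^{(\mathrm{L})}$, and an entirely analogous computation starting from $\gamma^{(\mathrm{L})}_{x+\ell,w}$ (where now $y<x\leq x+\ell$ forces the delta term to vanish) handles $z=x+\ell$.

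I do not anticipate any genuine obstacle: the argument is just the classical one reread with $c^\dagger_z c_w$ replacing $p_z p_w$ and $q_z q_w$, and the single mildly non-obvious step is the elementary inequality $|\psi_k(y)|^2\leq|\psi_k(y)|$, which is what lets us reduce the quadratic-in-$\psi_k$ sum to the linear Kunz--Souillard bound \eqref{eq: kunz souillard bound}.
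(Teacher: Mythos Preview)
Your proposal is correct and follows essentially the same route as the paper, which simply observes that the explicit formula for $\gamma^{(\mathrm{R})}_{z,w}$ coincides (up to the harmless sign and factor-of-$\tfrac12$ changes induced by the quantum convention \eqref{eq: difference splittings quantum}) with \eqref{eq: gamma1}, and then copies the proof of Lemma~\ref{lem: bounds on gamma} verbatim. One tiny slip: in item~1 you want the normalization $\sum_y|\psi_k(y)|^2=1$ (to control $\sum_{y\geq x}|\psi_k(y)|^2\leq 1$) rather than the completeness relation $\sum_k|\psi_k(y)|^2=1$, though of course both identities hold.
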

\begin{proof}
From the above we derive an explicit expression for $\gamma^{\mathrm{R},\mathrm{R}}_{w,z} $ and this expression is identical to the one for the classical anharmonic case, i.e.\ equation \eqref{eq: gamma1} holds without any change.  We can therefore copy line per line the proof of Lemma  \ref{lem: bounds on gamma}. 

\end{proof}

Next, we put the new left-right splitting to use to cast the local current $j_x$ a time-derivative $j_x=\caL u_x$, up to a small correction.

\begin{lemma}[localization]\label{lem: localization quantum }
For $x \in G_0(\ell)$, there exist observables $u_x$ and $f_x$ so that 
\begin{equation}  \label{eq: poisson quantum}
	\mathcal L u_x 
	\; = \;
	j_x + f_x,\qquad   \langle u_x \rangle_\mu=0; ,\qquad   \langle f_x \rangle_\mu=0  
\end{equation}
with the following properties:
\begin{enumerate}
\item
The observable $u_x$ has support in $B$, depends only on $\omega_x, x \in B$, and satisfies
$$ \langle u_x^2 \rangle_\mu \leq \Const\ell^2. $$ 
\item    There is a random variable $r_x >0$ depending only on $\omega_y$ with $ |x-y| \leq \ell+1$ so that,
\begin{equation}
\label{eq: bound on fx quantum}
   {E}(r_x)  \leq  \Const \ell  e^{-\ell/\xi} 
\end{equation}
and, for any $x,x' \in G_0(\ell)$,  such that $|x-x'| > 2\ell+2$
$$
\langle f^2_x  \rangle_\mu   \leq  \Const r^2_x,\qquad   \langle f_x f_{x'} \rangle_\mu  =0
$$
\end{enumerate}
\end{lemma}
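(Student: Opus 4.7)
The plan is to mimic the proof of the classical Lemma \ref{lem: localization}, replacing the Poisson bracket $\{\cdot,\cdot\}$ by $i$ times the commutator, $H_{\mathrm L}$ by $N_{\mathrm L}$, and $\widetilde{H}_{\mathrm L}$ by $\widetilde{N}_{\mathrm L}$. Concretely, I set
\[
u_x \;:=\; N_{\mathrm L} - \widetilde{N}_{\mathrm L} \;-\; \langle N_{\mathrm L} - \widetilde{N}_{\mathrm L}\rangle_\mu,
\]
a quadratic polynomial in the fermion operators supported in $B$, depending on the disorder only through $(\omega_y)_{y \in B}$ via the eigenvectors $\psi_k$, and with zero mean by construction. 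Splitting $\mathcal L = \mathcal L_B + \mathcal L_{B^c} + \mathcal L_{\partial B}$, the middle piece annihilates $u_x$ because $u_x$ is even and supported in $B$; the invariance $[H_B,\widetilde{N}_{\mathrm L}]=0$ built into the eigenmode splitting reduces $\mathcal L_B u_x$ to $i[H_B,N_{\mathrm L}]$; and a direct computation using the canonical anticommutation relations shows that only the hopping bond crossing the cut between $\mathrm L$ and $\mathrm R$ contributes, yielding exactly $j_x$ up to a harmless shift in the definition of the cut. Therefore $\mathcal L u_x = j_x + f_x$ with $f_x := i[H_{\partial B}, u_x]$, and $\langle f_x\rangle_\mu = 0$ follows from $\langle j_x\rangle_\mu = 0$ together with $\langle \mathcal L u_x \rangle_\mu = 0$ (the latter because $H$ preserves $N$, hence $[H,\rho_\mu] = 0$).

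For item 1, I use the representation
\[
u_x \;=\; \sum_{z,w \in B} \gamma^{(\mathrm L)}_{z,w}\, c^{\dagger}_z c_w \;+\; \mathrm{const}
\]
from \eqref{eq: difference splittings quantum}, with the constant chosen to enforce $\langle u_x\rangle_\mu = 0$. Because $\langle\cdot\rangle_\mu$ is the product state in the occupation basis, $\langle c^{\dagger}_z c_w\rangle_\mu$ vanishes unless $z=w$, and quartic expectations $\langle c^{\dagger}_z c_w c^{\dagger}_{z'} c_{w'}\rangle_\mu$ vanish unless the four indices can be paired into two equal pairs. Combined with the uniform bound $|\gamma^{(\mathrm L)}_{z,w}| \le \Const$ from Lemma \ref{lem: bounds on gamma quantum}(1), the number of surviving index tuples is $O(\ell^2)$ and each contribution is $O(1)$, giving $\langle u_x^2\rangle_\mu \le \Const\,\ell^2$.

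For item 2, the boundary Hamiltonian consists of the two hopping bonds crossing $\partial B$,
\[
H_{\partial B} \;=\; J\bigl(c^{\dagger}_{x+\ell+1}c_{x+\ell}+c^{\dagger}_{x+\ell}c_{x+\ell+1}\bigr) + J\bigl(c^{\dagger}_{x-\ell}c_{x-\ell-1}+c^{\dagger}_{x-\ell-1}c_{x-\ell}\bigr),
\]
so $f_x = i[H_{\partial B}, u_x]$ is a quadratic polynomial whose coefficients involve only the boundary rows $\gamma^{(\mathrm L)}_{z,\cdot}$ with $z \in \{x-\ell,x+\ell\}$. Setting
\[
r_x \;:=\; \sum_{z \in \{x-\ell,x+\ell\}}\sum_{w \in B} |\gamma^{(\mathrm L)}_{z,w}|,
\]
Lemma \ref{lem: bounds on gamma quantum}(2) yields $E(r_x) \le \Const\,\ell\,e^{-\ell/\xi}$, and the operator-norm bound $\|f_x\| \le \Const\, r_x$ gives $\langle f_x^2\rangle_\mu \le \Const\, r_x^2$. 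For $|x-x'|>2\ell+2$, the supports of $f_x$ and $f_{x'}$, contained in $[x-\ell-1,x+\ell+1]$ and $[x'-\ell-1,x'+\ell+1]$ respectively, are disjoint, so product-state factorization together with $\langle f_x\rangle_\mu = \langle f_{x'}\rangle_\mu = 0$ yields the exact vanishing $\langle f_x f_{x'}\rangle_\mu = 0$, strictly stronger than the exponentially decaying bound needed in the classical case. The only mildly nontrivial ingredient is the canonical anticommutation relations computation identifying $i[H_B,N_{\mathrm L}]$ with $j_x$; every other step is a direct transcription of the classical argument, in places simplified by the product-state structure of $\langle\cdot\rangle_\mu$.
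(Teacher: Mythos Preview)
Your approach is essentially identical to the paper's: same choice $u_x = N_{\mathrm L} - \widetilde N_{\mathrm L} - \langle N_{\mathrm L} - \widetilde N_{\mathrm L}\rangle_\mu$, same decomposition $\mathcal L = \mathcal L_B + \mathcal L_{B^c} + \mathcal L_{\partial B}$, same $f_x = i[H_{\partial B},u_x]$, same $r_x = \sum_{z\in\{x-\ell,x+\ell\}}\sum_{w\in B}|\gamma^{(\mathrm L)}_{z,w}|$, and the same appeal to the product-state structure for the vanishing of $\langle f_x f_{x'}\rangle_\mu$.

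There is one small omission worth correcting. Your $H_{\partial B}$ contains only the two hopping bonds, but the full Hamiltonian also has the interaction terms $g\tau_y n_y n_{y+1}$ on those bonds; in particular $g\tau_{x-\ell-1} n_{x-\ell-1} n_{x-\ell}$ straddles $\partial B$ and $\tau_{x-\ell-1}$ need not vanish (only $\tau_y$ with $|y-x|\le \ell$ are zero). With your truncated $H_{\partial B}$ the identity $\mathcal L u_x = j_x + f_x$ fails when $\tau_{x-\ell-1}=1$. The fix is immediate: include these interaction terms in $H_{\partial B}$ as the paper does. Since $\|H_{\partial B}\|\le \Const$ still holds, your bound $\|f_x\|\le \Const\, r_x$ and the support containment $\mathrm{supp}(f_x)\subset [x-\ell-1,x+\ell+1]$ remain valid, and the rest of your argument goes through unchanged.
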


\begin{proof}
We take 
$$
u_x=N_L-\widetilde{N}_L -  \langle N_L-\widetilde{N}_L \rangle_\beta
$$
which satisfies all the requirements of $(1)$, because of the bounds in Lemma \ref{lem: bounds on gamma quantum}. 
By the definition of $j_x$, we have that $\caL_B(N_\mathrm{L})+\caL_{B^c}(N_\mathrm{L})=\caL_B(N_\mathrm{L})=j_x$ and by construction $\caL_B(\widetilde{N}_\mathrm{L})=0$. Hence \eqref{eq: poisson quantum} is satisfied by
$$
f_x := (\caL-\caL_B-\caL_{B^c}) u_x = \id [ H_{\partial B}    ,u_x],
$$
with 
$$ H_{\partial B}=   \sum_{y \in \{x-\ell-1,x+\ell \}}  J(c^\dagger_{y}c_{y+1}+hc)  + g \tau_y  n_y n_{y+1}
 $$ 
leading also to the correct support properties.
To check that $\langle f_x \rangle_\mu=0$, we use that $\langle\mathcal L w \rangle_\mu=0$ for any observable $w$ and that $\langle j_x \rangle_\mu=0$, see comment in Section \ref{sec: dynamics}.  Using the explicit form of $u_x$, we get
$$
f_x =   \sum_{z \in \{-\ell,\ell\}} \sum_{w \in B} \gamma_{z,w} [H_{\partial B}, c^\dagger_z  c_w    ]
$$ 
Putting $r_x:=\sum_{z \in \{-\ell,\ell\}} \sum_{w \in B} |\gamma_{z,w}| $, the claims follow by Lemma \ref{lem: bounds on gamma} $2)$,  the fact that $ || H_{\partial B} || \leq \Const$, and the product state property in Lemma \ref{lem: decay correlations quantum}.
\end{proof}

\subsection{Griffiths regions}  \label{sec: griffiths regions quantum}

The rest of the argument proceeds just as in the classical case, in Section \ref{sec: griffiths regions}. In particular, the definition of the set $G(\ell)$ and its main properties, i.e.\ Lemma \ref{lem: di variables}, is taken over without any change. Then, the reasoning in the rest of the proof for the classical anharmonic chain relies entirely on the representation  \eqref{eq: current conserved energy} for $x \notin G(\ell)$
$$
	j_x 
	\; = \;
	j_{x_{G}}
	\; \pm \; \mathcal L \left( \sum_{y\in \{x,x_{G}\}} h_y \right)
$$
(with the set $\{x,x_{G}\}$ defined below \eqref{eq: current conserved energy}). 
For the fermionic chain, we simply change this to 
$$
	j_x 
	\; = \;
	j_{x_{G}}
	\; \pm \; \mathcal L \left( \sum_{y\in \{x,x_{G}\}} n_y \right)
$$
The role of Lemma \ref{lem: decay of correlations} is played here by Lemma \ref{lem: decay correlations quantum}, and hence we can replace the correlation length $\zeta$ by $0$.

\bigskip
\noindent
\textbf{Acknowledgements.}
We are most grateful to David A.~Huse and J.~L.~Lebowitz who suggested the study of the disordered models introduced in this paper. 
The work of F.~H.\@ and S.O.\@ was partially supported by the grant ANR-15-CE40-0020-01 LSD of the French National Research Agency (ANR).
F.~H.\@ acknowledges also the support of the ANR under grant ANR-14-CE25-0011 EDNHS.
W.~D.~R.\@ acknowledges the support of the Flemish Research Fund FWO under grants
G098919N and G076216N, and the support of KULeuven University under internal grant C14/16/062.

\bibliographystyle{plain}
\bibliography{griffiths_phase.bib}

\end{document}